\newtheorem{thm}{Theorem}
\newtheorem{lem}{Lemma}
\newtheorem{prop}{Proposition}
\newtheorem{defn}{Definition}
\DeclareMathOperator{\dist}{dist}
\DeclareMathOperator*{\mini}{min.}
\DeclareMathOperator*{\st}{subj.\ \! to}
\DeclareMathOperator*{\Limsup}{Lim\,sup\,}
\DeclareMathOperator{\ViabOp}{Viab}
\newcommand{\Viab}{\ViabOp^\textup{sd}}
\DeclareMathOperator{\err}{err}
\DeclareMathOperator{\sgn}{sgn}
\DeclareMathOperator{\conv}{conv}
\DeclareMathOperator{\vol}{vol}
\DeclareMathOperator*{\argmax}{\arg\max}
\newcommand{\Y}{\mathcal{Y}}
\newcommand{\V}{\mathcal{V}}
\newcommand{\U}{\mathcal{U}}
\newcommand{\Ul}{\mathscr{U}^\textup{pwc}}
\newcommand{\X}{\mathcal{X}}
\newcommand{\K}{\mathcal{K}}
\newcommand{\A}{\mathcal{A}}
\newcommand{\B}{\mathcal{B}}
\newcommand{\C}{\mathcal{C}}
\newcommand{\T}{\mathbb{T}}
\renewcommand{\S}{\mathcal{S}}
\newcommand{\Lom}{\mathcal{L}}
\newcommand{\tr}[1]{#1^\top} 
\newcommand{\Real}{\mathbb{R}}
\newcommand{\abs}[1]{\left\vert#1\right\vert}
\newcommand{\norm}[1]{\left\lVert#1\right\rVert}
\newcommand{\inner}[1]{\langle#1\rangle}
\begin{document}

\title{A sampling-based approach to scalable constraint satisfaction in linear sampled-data systems---Part I: Computation\thanks{An earlier version of this paper was partially presented at the 17th Int'l Conference on Hybrid Systems: Computation and Control, April 15-17, 2014, Berlin, Germany \cite{Gillula_HSCC2014}.}} 

\author{Shahab Kaynama, Jeremy H. Gillula, Claire J. Tomlin\thanks{The authors are with Electrical Engineering \& Computer Sciences, University of California at Berkeley, 337 Cory Hall, Berkeley, CA 94720, USA. {\tt\small \{kaynama, jgillula, tomlin\}@eecs.berkeley.edu}}}

\date{(Preprint Submitted for Publication)}

\maketitle


\begin{abstract}                          
    %
    Sampled-data (SD) systems, which are composed of both discrete- and continuous-time components, are arguably one of the most common classes of cyberphysical systems in practice; most modern controllers are implemented on digital platforms while the plant dynamics that are being controlled evolve continuously in time.  As with all cyberphysical systems, ensuring hard constraint satisfaction is key in the safe operation of SD systems.  A powerful analytical tool for guaranteeing such constraint satisfaction is the viability kernel: the set of all initial conditions for which a safety-preserving control law (that is, a control law that satisfies all input and state constraints) exists. In this paper we present a novel sampling-based algorithm that tightly approximates the viability kernel for high-dimensional sampled-data linear time-invariant (LTI) systems.  Unlike prior work in this area, our algorithm formally handles both the discrete and continuous characteristics of SD systems. We prove the correctness and convergence of our approximation technique, provide discussions on heuristic methods to optimally bias the sampling process, and demonstrate the results on a twelve-dimensional flight envelope protection problem.
\end{abstract}

\section{Introduction} \label{S:Intro}

The mathematical guarantee of satisfaction of hard input and state constraints is an increasingly desirable property that every safety-critical cyberphysical system must implement. The subset of the state space for which this property holds is known as the viability kernel~\cite{aubin2011}, or alternatively, in the infinite horizon case, the maximal controlled-invariant set~\cite{blanchini2008set}. Consequently, a tremendous amount of work in the recent literature has been focused on methods for computing the viability kernel.

Constrained sampled-data (SD) systems describe a large class of cyberphysical systems. In most practical settings the system evolves continuously in time but the state is measured only at discrete time instants~\cite{Goodwin_CSSMAG13}. Consequently, any admissible control policy is piecewise constant; the input can only be applied at the beginning of each sampling interval and is kept constant (under zero-order hold) until the next sampling time. Examples of SD systems include control of blood glucose levels in type-1 diabetes ~\cite{diabetes_MPC_2007} where insulin could only be administered every $30$ minutes during the clinical trials. Additionally, restrictions on sampling frequency is not always due to sensory limitations or the particulars of an application. For instance, in model predictive control (MPC) a higher sampling frequency results in a significantly larger online optimization problem over each prediction horizon---a known limiting factor in embedded control design, e.g.\ in the automotive industry. Furthermore, without additional rate constraints, a higher sampling frequency would require controllers with much higher bandwidth.



Discretizing the dynamics and naively designing control policies in discrete time ignores the behavior of the true system in between two sampling instants. This inter-sample behavior can be crucial in safety-critical systems where the state constraint is associated with ``safety'' of the system. At the same time, performing continuous-time safety analysis on the system (for example, using the traditional level-set technique~\cite{MBT05}), which only requires a mild assumption of Lebesgue measurability of the control input, cannot provide guarantees about the behavior of the SD system where the input is restricted to draw from the class of piecewise constant signals~\cite{kaynama_NAHS2012}. All this warrants a technique that can formally handle SD systems.

We present a sampling-based approach that yields tight under- and over-approximations of the viability kernel for \emph{high-dimensional SD} linear time-invariant (LTI) dynamics. 

\subsection{Related Work}

The classical numerical schemes for approximating the viability kernel are those based on gridding the state space (and discretizing the dynamics) and appropriately evolving the constraints over this stationary grid. Such schemes include the level-set methods ~\cite{MBT05} and variants of Saint-Pierre's viability algorithm ~\cite{Saint-Pierre_1994}. Despite their versatility in handling complex dynamics and sets, the applicability of these schemes has historically been limited to systems of dimensions less than five. Efforts to generalize such grid-based techniques to moderately dimensioned systems include structure decomposition~\cite{kaynama_TAC2009, Mitchell_HSCC11} and approximate dynamic programming~\cite{Coquelin_Martin_Munos_2007}.


Algorithms from within the MPC community have also emerged that enable the computation of the viability kernel for discrete-time LTI systems with polytopic constraints~\cite{blanchini2008set}. Due to the fact that these algorithms recursively compute the Minkowski sum, linear transformation, and intersection of polytopes, they can only be applied to low dimensional systems; the number of vertices of the resulting polytope grows rapidly with each subsequent Minkowski sum operation, while the intersection operation at each iteration requires a vertex to facet enumeration of polytopes---an operation that is known to be intractable in high dimensions~\cite{Bremner_1998}. In more general contexts (e.g., for continuous-time systems), an ellipsoidal approximation of the region of attraction of the MPC is computed as a (crude) representation of the viability kernel~\cite{chen2001optimisation}. These, as well as other approximation techniques such as $\beta$-contractive polytopes~\cite{Alessio_Lazar_Bemporad_Heemels_2007}, generally require existence of a stabilizing controller within the constraints---a requirement that may not always be readily satisfied.

For LTI systems with convex constraints, \cite{kaynama_Aut2013} (discrete-time) and \cite{kaynama_HSCC2012} (discrete- and continuous-time) introduced efficient and scalable algorithms based on support vector representations and piecewise ellipsoidal sets to conservatively approximate the viability kernel. These algorithms follow the flow of the dynamics in the same spirit as Lagrangian techniques for maximal reachability such as~\cite{Le_Guernic_Girard_2010, SpaceEx_2011, Kurzhanski_SCL_2000}.

Approximating the viability kernel can also be viewed as a search for an appropriate control Lyapunov function subject to additional input and state constraints. As such, in parallel to the above developments, for polynomial systems with semi-algebraic constraints, various sum-of-squares (SOS)\footnote{SOS is a relaxation of the original semidefinite program.} optimization-based techniques have been proposed~\cite{tedrake_finitetime_13,tedrake_funnels_11, tedrake2010lqr, tan_CLF_04, Prajna_Jadbabaie_2004} that either directly form a polynomial approximation of the viability kernel, or can be modified to do so. The resulting bilinear SOS program is solved either by alternating search (prone to local optima) or through convex relaxations (introduces additional conservatism). The degree of the SOS multipliers, which directly translates to the presumed degree of the polynomial that is to describe the kernel, is commonly kept low (e.g.\ quartic), striking a tradeoff between excessive conservatism and computational complexity. A related SOS-based technique is the recent method of occupation measures~\cite{henrion2012convex, majumdar2013technical} that, while scalable, can only \emph{over}-approximate the desired set (which is insufficient for safety). Though the approximation recovers the true kernel in the limit, the error is not monotonically decreasing with the degree of the multipliers. 

All of the above algorithms are designed for either discrete- or continuous-time dynamics, and very little effort has been dedicated to \emph{sampled-data} system---precisely the systems that, due to their physically descriptive and realistic nature, stand to benefit the most from safety formalism. To the best of our knowledge, the only other scalable work on computing the viability kernel for SD systems is presented in~\cite{kaynama_NAHS2012}, where a piecewise ellipsoidal algorithm is proposed for LTI dynamics with ellipsoidal constraints. Unfortunately due to the projections and cross-products involved in this algorithm, the quality of the approximation degenerates rapidly with the time horizon.

\subsection{Summary of Contributions}

We propose a sampling-based approach that directly handles SD systems, albeit under LTI dynamics and convex constraints. The algorithm provably under-approximates the viability kernel in a scalable and efficient manner. Since the proposed method deals with individual trajectories to infer the evolution of sets of initial conditions, it is not explicitly restricted by a particular shape of the constraints, although its computational complexity does depend on these shapes. 


Our technique yields a ``tight'' approximation in the sense that the resulting set touches the boundary of the true viability kernel from the inside with arbitrary precision up to a numerical constant. The algorithm is sampling based, meaning that the points to be included as boundary points of the viability kernel are sampled according to a probability distribution. We provide a convergence proof that shows that such a sampling-based technique is superior to any alternative deterministic approach.

Section~\ref{S:prob_formulation} formulates the problem we wish to address. Section~\ref{S:methodology} presents our main under-approximation algorithm, while Section~\ref{S:correct_converge} proves its correctness and convergence. Section~\ref{S:complexity_scalability} discusses the computational complexity of the technique and showcases, via an example, its scalability. To provide a measure of conservatism of our technique, we will also present an over-approximation algorithm in Section~\ref{S:over_approx}. This over-approximation algorithm is then utilized in Section~\ref{S:improve_underapprox_bias} to help guide the under-approximation process. This section also describes a few heuristics to bias the random sampling of the state constraint so as to achieve superior convergence and accuracy properties. We study a 12D flight envelope protection problem in Section~\ref{S:examples}, before providing concluding remarks and future directions in Section~\ref{S:conc}.






\section{Problem Formulation} \label{S:prob_formulation}

Consider the LTI system
\begin{equation}\label{E:SD_system}
    \dot{x}(t) =  Ax(t) + Bu(t)
\end{equation}
%
with state $x(t)\in \Real^n$, control input $u(t)\in\U$, where $\U$ is a compact convex subset of $\Real^{m}$. $A$ and $B$ are constant matrices of appropriate dimension.
The state of the system is measured at every time instant $t_k:=k\delta$ for  $k \in \mathbb{Z}_{\geq 0}$ and fixed sampling interval $\delta \in \Real_{>0}$. We are concerned with the evolution of the system over $\T:= [0,\tau]$ with arbitrary, finite time horizon $\tau \in \Real_{>0}$. We denote by $N_\delta := \lceil \tau / \delta \rceil$ the number of sampling intervals in $\T$. The input is applied at the beginning of each sampling interval and is kept constant until the next sampling instant. Thus the input signal draws from the set of piecewise constant functions
\begin{equation}
    \Ul_\T := \{u \colon \T \to \Real^{m} \;\text{piecewise const.}, \; u(t_k)\in \U \; \forall k, \; u(t) = u(t_k)\; \forall t \in [t_k,t_{k+1}) \}.
\end{equation}
For every $x_0\in \Real^n$ and $u(\cdot)\in \Ul_\T$, we denote the (unique) trajectory of \eqref{E:SD_system} by $x_{x_0}^{u}\colon \T \to \Real^n$ with initial condition $x_{x_0}^{u}(0)=x_0$. When clear from the context, we shall drop the subscript and superscript from the trajectory notation.

For a nonempty compact convex state constraint set $\K \subset \Real^n$, deemed \emph{safe}, we examine the following construct:
\begin{defn}[SD Viability Kernel] \label{D:disc}
    The finite-horizon SD viability kernel of $\K$ is the set of all initial states for which there exists a control law such that the trajectories emanating from those states remain in $\K$ over $\T$:
    \begin{equation}\label{E:disc_setbuilder}
        \Viab_\T(\K):=\bigl\{ x_0 \in \K \mid  \exists  u(\cdot)\in \Ul_\T, \;
        \forall  t\in \T, \, x_{x_0}^{u}(t) \in \K \bigr\}.
    \end{equation}
\end{defn}




We seek to find a scalable technique that \emph{under-approximates} the viability kernel. (Any approximation must be conservative, in that at the very least all states for which no admissible input can maintain safety must be excluded.)



\subsection{Preliminaries and Notation}


We say that the vector field in ~\eqref{E:SD_system} is bounded on $\K$ if $\exists M > 0$ such that $\norm{Ax+Bu}_p \leq M$ $\forall (x,u) \in \K \times \U$ for some norm $\norm{\cdot}_p$. If $\K$ and $\U$ are compact, every continuous vector field is bounded on $\K$. The $\norm{\cdot}_p$-distance of a point $x \in \Real^n$ from a nonempty set $\A \subset \Real^n$ is defined as $\dist_p(x,\A) := \inf_{ a \in \A}\norm{x-a}_p$.



%


The \emph{Minkowski sum} of any two nonempty subsets $\A$ and $\C$ is $\A \oplus \C := \{ a+c \mid a \in \A,\, c\in \C \}$; their \emph{Pontryagin difference} (or, the \emph{erosion} of $\A$ by $\C$) is $\A \ominus \C := \{a \mid a \oplus \C \subseteq \A \}$. We denote by $\partial \C$ the boundary of the set $\C$, by $\C^c$ its complement, and by $\vol(\C)$ its volume. $\conv(\{v_0, \ldots, v_N\})$ denotes the convex hull of a set of points $v_0, \ldots, v_N$. $\B_p^n(x,a)$ denotes the closed $p$-norm ball of radius $a \in \Real_{\geq 0}$ about a point $x$ in $\Real^n$, and $\S_p^{n-1}$ the codimension one boundary $\partial \B^n_p(0,1)$. A \emph{ray} in $\Real^n$ is the set of points $\vec{r} = \{r_0 + s r_d \mid s \in \Real_{\ge 0}\}$, where $r_0 \in \Real^n$ is the \emph{origin} of the ray, and $r_d \in \Real^n$ is a unit vector giving the \emph{direction} of the ray.




\section{Methodology}\label{S:methodology}


\subsection{Overview of the Algorithm}

Before we describe our algorithm we first elaborate on some simple subroutines which we make use of, but which we will not formally define due to space constraints.
\begin{itemize}
	\item $\textsc{Find-Intersection-on-Boundary}(\C, \vec{r})$ -- Input: A convex compact set $\C \subset \Real^n$, and a ray $\vec{r}$ with origin $r_0 \in \C$.  Returns a point $x$ along the ray $\vec{r}$ on $\partial \C$.  We note that since $\C$ is convex and compact, and since $r_0 \in \C$, there is exactly one such point $x$.  Runs in time proportional to the number of faces of $\C$ if $\C$ is a polytope, and constant time if $\C$ is an ellipsoid.
	\item $\textsc{Sample-Ray}(x)$ -- Input: A point $x \in \Real^n$.  First samples a point $r_d$ at random from $\S_2^{n-1}$. Returns the ray $\vec{r} = \{x + s r_d\}$. Runs in time linear in $n$.
	\item $\textsc{Feasible}(x, \C)$ -- Input: A point $x \in \Real^n$ and a convex compact set $\C \subset \Real^n$.  Returns true if $x \in \Viab_\T(\C)$, and false otherwise.  Further details on the implementation of this subroutine and its time complexity are given in Sections ~\ref{subsec:Checking_Point_Feasibility} and ~\ref{S:complexity_scalability}.
\end{itemize}



%

\newcommand{\const}{0.7}
\newcommand{\spc}{\hspace{2pt}}
\begin{figure}[t]
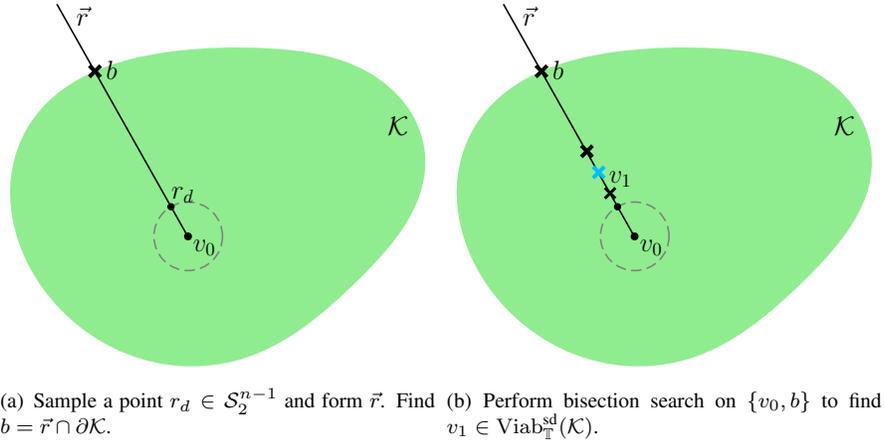
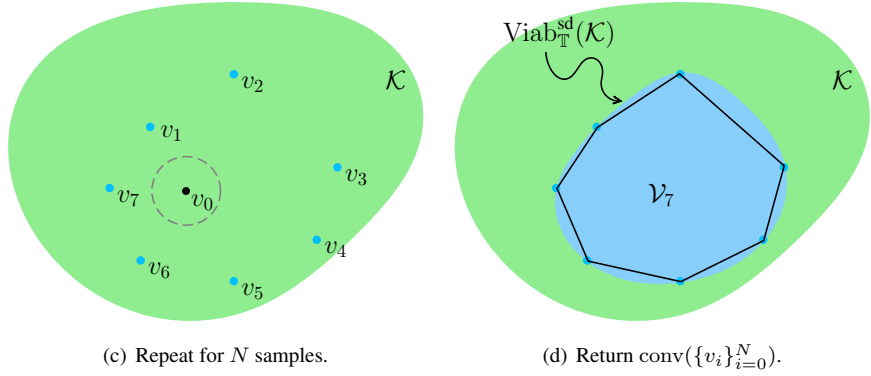

    \centering
    \subfigure[Sample a point $r_d \in \S_2^{n-1}$ and form $\vec{r}$. Find $b = \vec{r}\cap \partial \K$.]{\scalebox{\const}{\input{alg1.TpX}}}
    \spc
    \subfigure[Perform bisection search on $\{v_0,b\}$ to find $v_1 \in  \Viab_\T(\K)$.]{\scalebox{\const}{\input{alg2.TpX}}}\\
    \subfigure[Repeat for $N$ samples.]{\scalebox{\const}{\input{alg3.TpX}}}
    \spc
    \subfigure[Return $\conv(\{v_i\}_{i=0}^N)$.]{\scalebox{\const}{\input{alg4.TpX}}}
    \caption{Illustration of Algorithm~\ref{Alg:polytopic_approx}.}
    \label{fig:iter}
\end{figure}


The under-approximation of the viability kernel proceeds as follows. We assume as input a description of $\K$,\footnote{In theory, $\K$ can be of any arbitrary (but convex) shape. In practice, this shape directly affects the run time complexity of the subroutines \textsc{Find-Intersection-on-Boundary} and \textsc{Feasible} as discussed in the bullet points above.} as well as some initial point $v_0 \in \Viab_\T(\K)$.\footnote{For linear systems if $(0,0) \in \K\times \U$ then $0 \in \Viab_\T(\K)$, and thus we can use $v_0 = 0$.  Otherwise, we assume we can sample points at random over $\K$ until we find a point $x$ via the subroutine $\textsc{Feasible}(x,\K)$ such that $x\in \Viab_\T(\K)$.}  We then construct a polytopic approximation of $\Viab_\T(\K)$ by iteratively sampling a direction $r_d$ and generating a ray $\vec{r}$ centered at $v_0$ (Algorithm~\ref{Alg:polytopic_approx}, step~\ref{line:sample}); finding the point $b \in \Real^n$ where $\vec{r}$ intersects the boundary of $\K$ (Algorithm~\ref{Alg:polytopic_approx}, step~\ref{line:find_boundary_intersection}); and then performing a bisection search along the line segment $\{v_0, b\}$ until we find a point $v_i$ such that $v_i \in \Viab_\T(\K)$ and $\dist_p(v_i, \partial \Viab_\T(\K)) < \epsilon$ for some desired accuracy $\epsilon >0$ in some norm $\norm{\cdot}_p$ (Algorithm~\ref{alg:bisection_feasibility_search}). By repeating for $N$ samples and taking the convex hull of the resulting points $\{v_0, \ldots, v_N\}$ we arrive at a polytope $\V_N \subseteq \Viab_\T(\K)$ which converges to $\Viab_\T(\K)$ (in a manner we shall formalize in Section~\ref{subsec:Algorithm_Convergence}). The conservatism of the algorithm will be formally proven in Section~\ref{subsec:Algorithm_Correctness}. Fig.~\ref{fig:iter} illustrates the under-approximation procedure.

\begin{alg}
    \begin{algorithmic}[1]

    \Procedure{Polytopic-Approx}{$\K, v_0, N$}

    \State $\V_0 \gets \{v_0\}$

	\For{$i = 1$ to $N$} \Comment{$N$ samples}
	\State $\vec{r} \gets \textsc{Sample-Ray}(v_0)$ \label{line:sample}

	\State $b \gets \textsc{Find-Intersection-on-Boundary}(\K, \vec{r})$\label{line:find_boundary_intersection}

	\State $v_{i} \gets \textsc{Bisection-Feasibility}(v_0, b, \K)$
	
    \State $\V_i \gets \conv(\V_{i-1} \cup \{v_i\})$
    \EndFor

    \State \textbf{return} $\V_N$

    \EndProcedure

%
%
%
%

    \end{algorithmic}
    \caption{Computes a polytopic under-approximation of $\Viab_{\T}(\K)$ with at most $N+1$ vertices}
    \label{Alg:polytopic_approx}
\end{alg}

\begin{alg}
    \begin{algorithmic}[1]

    \Function{Bisection-Feasibility}{$a, b, \K$}

	\State $c \gets a+(b-a)/2$
	
	\If{$\textsc{Feasible}(c, \K)$} \label{line:Feasible}
		\If{$\dist_p(b, c) < \epsilon$} \label{line:Epsilon_Threshold}
			\State \textbf{return} $c$
		\Else
			\State \textbf{return} $\textsc{Bisection-Feasibility}(c, b, \K)$
		\EndIf
	\Else
		\State \textbf{return} $\textsc{Bisection-Feasibility}(a, c, \K)$
	\EndIf

    \EndFunction

    \end{algorithmic}
    \caption{Determines an $\epsilon$-accurate intersection of $\partial \Viab_\T(\K)$ and the line between $a$ and $b$.}
    \label{alg:bisection_feasibility_search}
\end{alg}

\subsection{Checking Point Feasibility}
\label{subsec:Checking_Point_Feasibility}

The key to our approach is the subroutine $\textsc{Feasible}(x_0, \K)$ in Step~\ref{line:Feasible} of Algorithm~\ref{alg:bisection_feasibility_search}, which returns true only if $x_0 \in \Viab_\T(\K)$. An overview of the procedure is as follows: Given an initial condition $x_0$, we verify the existence of a piecewise constant control law that ensures that the trajectory starting from $x_0$ belongs to a subset of $\K$ at every sampling instant. This subset is appropriately chosen at a certain distance from the exterior $\K^c$ such that the inter-sampling portions of the trajectory do not escape $\K$. The state $x_0$ is then labeled as \emph{feasible}. We employ forward simulation to determine feasibility. To do this in a tractable fashion, we use a finite-difference approximation of the dynamics. Therefore, we also need to take into account the effect of the discretization error (Fig.~\ref{F:CT_vs_DT} depicts the significance of this error via a trivial example).



\begin{figure}[t]
  \centering
        \includegraphics[clip=true, scale = 0.45]{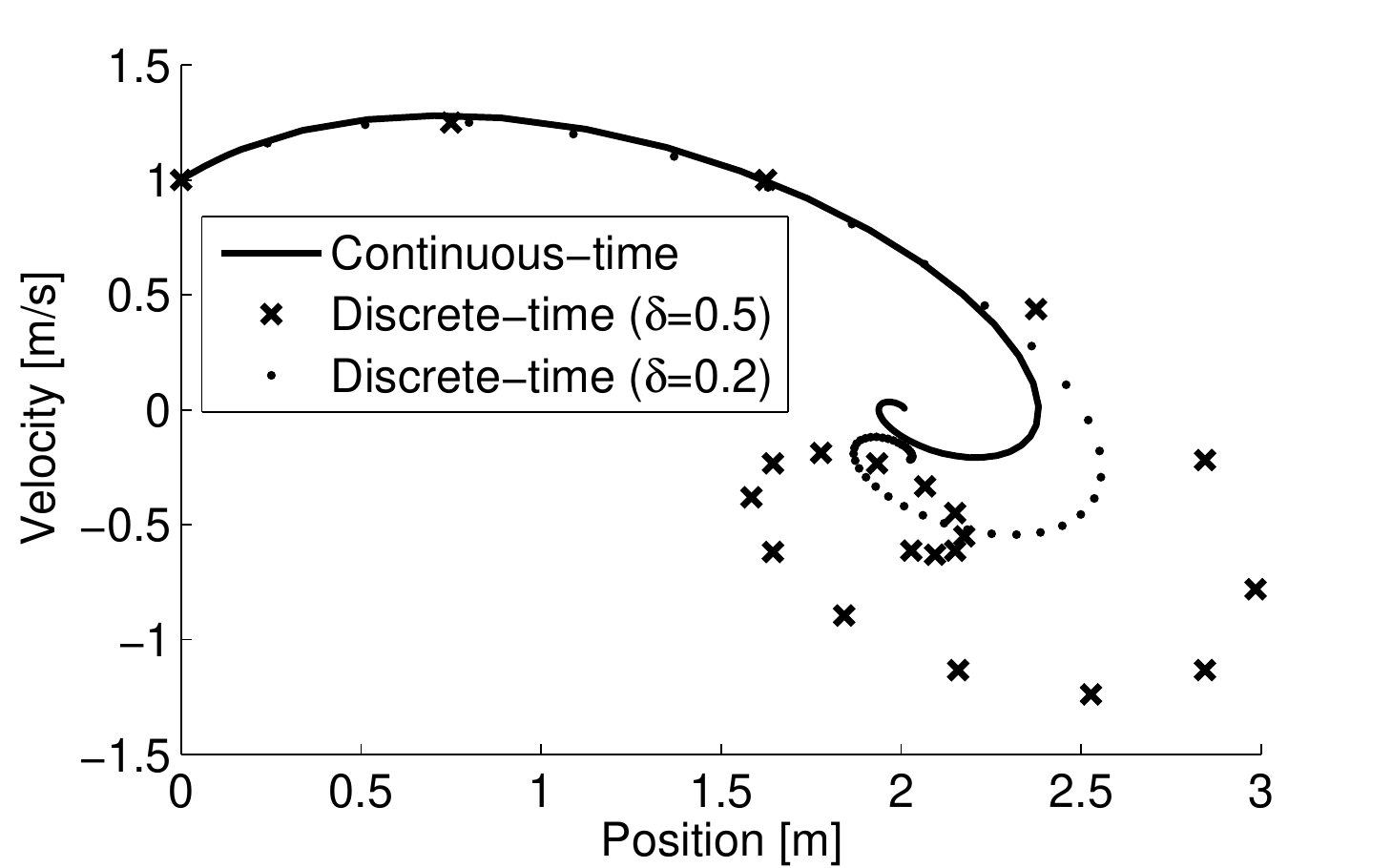}
        \caption{Simulated trajectory of a mass-spring-damper (unit parameters) under constant force in continuous time vs.\ its Euler discretization for two sampling times $\delta = 0.2, 0.5$.} 
        \label{F:CT_vs_DT}
\end{figure}

\subsubsection{Dealing With Inter-Sampling Behavior}

The following lemma ensures that if a control law exists that can keep the trajectory value evaluated at every sampling instant in a certain subset of $\K$, then the continuous evolution of the system in between sampling instants also maintains safety; that is, the curvature of the trajectory during the sampling intervals does not escape $\K$ (Fig.~\ref{fig:erosion1}).



\begin{figure}[t]
    \centering
    \scalebox{0.8}{\input{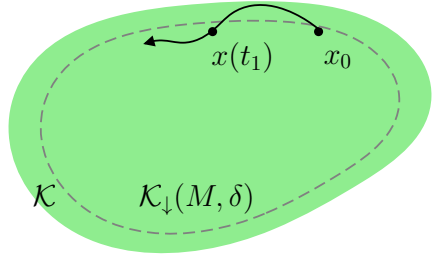}}
    \caption{Erosion of $\K$ ensures that the curvature of the continuous trajectory in between two sampling instants cannot escape safety.}
    \label{fig:erosion1}
\end{figure}

\begin{lem}\label{Lem:HSCC}
    Let~\eqref{E:SD_system} be uniformly bounded on $\K$ by $M>0$ in some norm $\norm{\cdot}_{p_1}$. With a sampling interval $\delta >0$ define
    \begin{equation}\label{E:K_downarrowMd}
    \begin{split}
        \K_\downarrow(M,\delta) &:= \left\{ x \in \K \mid \dist_{p_1}(x,\K^c) \geq M \delta  \right\}\\
        &= \K \ominus \B_{p_1}^n(0,M\delta),
    \end{split}
    \end{equation}
    %
    For a given initial condition $x_0$, if $\exists u(\cdot) \in \Ul_\T$ such that $x(t_k) \in \K_\downarrow(M,\delta)$ $\forall k \in \{0,\dots,N_\delta\}$, then $x_0 \in \Viab_\T(\K)$.
\end{lem}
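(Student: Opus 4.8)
The plan is to observe that the hypothesized control $u(\cdot) \in \Ul_\T$ is itself admissible, so proving $x_0 \in \Viab_\T(\K)$ reduces to showing that the corresponding trajectory satisfies $x(t) \in \K$ for \emph{all} $t \in \T$, not merely at the sampling instants (note $x_0 = x(t_0) \in \K_\downarrow(M,\delta) \subseteq \K$ already secures the membership $x_0 \in \K$ required by the definition). Since the input is held constant on each sampling interval, I would localize the argument to a single interval $[t_k, t_{k+1}]$ (and the possibly shorter terminal interval $[t_{N_\delta-1}, \tau]$), on which $u(\cdot) \equiv u(t_k) \in \U$, and establish that the trajectory cannot leave $\K$ there. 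Concatenating over $k \in \{0, \ldots, N_\delta - 1\}$ then yields safety on all of $\T$.

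On a fixed interval, the core estimate is that the trajectory cannot drift from its (safe, deep-interior) value $x(t_k)$ by more than $M\delta$ in the $\norm{\cdot}_{p_1}$ norm, because $\norm{x(t)-x(t_k)}_{p_1} = \norm{\int_{t_k}^t (Ax(s)+Bu(t_k))\,ds}_{p_1} \le \int_{t_k}^t \norm{Ax(s)+Bu(t_k)}_{p_1}\,ds \le M(t-t_k) \le M\delta$, where the velocity bound $M$ is supplied by uniform boundedness of the vector field \emph{on} $\K$. Since $x(t_k) \in \K_\downarrow(M,\delta)$ sits at distance at least $M\delta$ from $\K^c$, a displacement of at most $M\delta$ should keep $x(t)$ inside $\K$.

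The one genuine obstacle is that this velocity bound is only valid while the trajectory remains in $\K$, so the estimate is circular as stated. I would break the circularity with a standard first-exit-time argument: assuming $x$ leaves $\K$ on $[t_k, t_{k+1}]$, set $t^* := \inf\{t : x(t) \notin \K\}$; since $x(t_k) \in \K_\downarrow(M,\delta) \subseteq \inter{\K}$ (using $M\delta>0$) we have $t^* > t_k$, and by continuity of $x$ together with closedness of $\K$ we get $x(s) \in \K$ for all $s \in [t_k, t^*]$ and $x(t^*) \in \partial\K \subseteq \cl(\K^c)$. The velocity bound is now legitimately applicable on the closed interval $[t_k, t^*]$, and combining it with the $1$-Lipschitz property of $\dist_{p_1}(\cdot, \K^c)$ and the fact that $\dist_{p_1}(x(t^*), \K^c) = 0$ gives the chain
\begin{equation*}
M\delta \le \dist_{p_1}(x(t_k), \K^c) \le \norm{x(t^*) - x(t_k)}_{p_1} \le \int_{t_k}^{t^*}\norm{Ax(s)+Bu(t_k)}_{p_1}\,ds \le M(t^* - t_k).
\end{equation*}
Hence $t^* - t_k \ge \delta$, forcing $t^* = t_{k+1}$ and equality throughout, so that $x(t_{k+1}) = x(t^*) \in \partial\K$; but the hypothesis places $x(t_{k+1}) \in \K_\downarrow(M,\delta)$, which is disjoint from $\partial\K$, a contradiction. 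For the terminal interval of length $\tau - t_{N_\delta-1} < \delta$ the same chain yields $M\delta \le M(t^*-t_{N_\delta-1}) < M\delta$ directly. Thus no exit occurs on any interval, $x(t) \in \K$ for all $t \in \T$, and therefore $x_0 \in \Viab_\T(\K)$. The only auxiliary facts I would track are that $\K^c$ is nonempty (true since $\K$ is compact) and that $\dist_{p_1}(\cdot,\K^c)$ equals the distance to $\cl(\K^c)$, which is what justifies the second inequality in the displayed chain.
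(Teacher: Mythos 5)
Your proof is correct and follows essentially the same route as the paper's: the paper omits the proof of this lemma, deferring to \cite[Proof of Prop.~1]{kaynama_HSCC2012} and \cite[Lem.~1]{Gillula_HSCC2014}, where the argument is precisely your per-interval drift bound $\norm{x(t)-x(t_k)}_{p_1}\le M(t-t_k)\le M\delta$ played against the $M\delta$-erosion of $\K$. Your first-exit-time device, which legitimizes applying the velocity bound $M$ (valid only on $\K$) up to the putative exit time, rigorously handles a circularity that the cited proofs treat only implicitly, but it is a refinement of the same argument rather than a different approach.
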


The proof is similar to the continuous-time analysis in~\cite[Proof of Prop. 1]{kaynama_HSCC2012} and is omitted here; see~\cite[Lem. 1]{Gillula_HSCC2014} for more detail.


\subsubsection{Dealing With Discretization Error}
\label{subsubsec:Dealing_With_Discretization_Error}

Recall that the solution $x_{k+1} = x(t_{k+1})$ of the SD system \eqref{E:SD_system} at time $t_{k+1}$ for any sampling interval $[t_k,t_k+\delta]$ starting from $x_k=x(t_k)$ using a constant input $u_k$ is
\begin{equation}
    x_{k+1} = e^{A\delta} x_k + \biggl(\int_0^\delta e^{A\lambda} d\lambda \cdot B \biggr) u_k. \label{E:SD_solution} 
\end{equation}
%
%
Consider the Taylor series expansion $e^{As} = \sum_{i=0}^\infty (As)^i/i!$. To approximate the evolution of~\eqref{E:SD_system} at every discrete time instant $t_k$ using a finite-difference equation, we approximate the above infinite sum by a $\zeta$th order finite sum
\begin{equation}
    A_{\zeta,s} := \sum\nolimits_{i=0}^\zeta \frac{(As)^i}{i!} \approx e^{As}, \quad  \zeta <\infty
\end{equation}
with truncation error
\begin{equation}
    E_{\zeta,s} := e^{As} - A_{\zeta,s} = \sum\nolimits_{i=\zeta+1}^\infty \frac{(As)^i}{i!}.
\end{equation}
The trajectory of the resulting finite-difference equation is
\begin{equation}
    \hat{x}_{k+1} = A_{\zeta,\delta} \hat{x}_k + \biggl(\int_0^\delta A_{\zeta,\lambda} d\lambda \cdot B \biggr) u_k.
\end{equation}
We refer to $\hat{x}$ as the \emph{nominal} state of the system.

It is important to emphasize that the computation of the matrix exponential has always been a challenging task~\cite{Moler_Van_Loan_2003}. An exact computation is generally not possible. Instead, approximate techniques are employed. For instance, Matlab uses the Pad\'{e} approximation with scaling and squaring method of \cite{Higham2005MatlabExpm}. In most practical cases, the forward Euler approximation ($\zeta =1$) is used.

Truncating the tail $E$ of the Taylor series expansion introduces a discretization error that results in mismatch between the true values of the system trajectory at discrete time instants and the values generated by the finite-difference model. That is, the \emph{discretization error} $e_k$ at time $t_k$ resulting from the truncation error $E$ will cause a deviation from the true state $x_k = x(t_k)$ such that 
\begin{equation}
    \hat{x}_k = x_k - e_k.
\end{equation}
Consequently, to formally guarantee an under-approximation of the SD viability kernel using a time discretized approach---i.e.\ via simulation of the nominal trajectory---we must take into account the effect of the error $e$ and its forward propagation in time (Fig.~\ref{fig:disc_error}).

\begin{figure}[t]
    \centering
    \scalebox{0.9}{\input{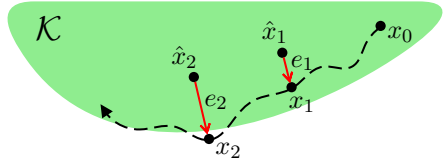}}
    \caption{The mismatch between the true state $x_k$ at sampling time $t_k$ and the nominal state $\hat{x}_k$ of the discretized model. The error $e_k$ propagates in time. If not accounted for in forward simulations, this mismatch could jeopardize safety.}
    \label{fig:disc_error}
\end{figure}

\begin{lem}\label{Lem:gamma_k}
    Suppose that there exist constants $\gamma_k \geq 0$ and a norm $\norm{\cdot}_{p_2}$ such that $\norm{e_k}_{p_2} \leq \gamma_k$ $\forall k$. Let
    \begin{equation}
        \K^k_\downarrow(M,\delta,\gamma_k) := \K_\downarrow(M,\delta) \ominus \B_{p_2}^n(0,\gamma_k) \neq \emptyset.
    \end{equation}
    Then we have that
    \begin{equation}
        \hat{x}_k \in \K^k_\downarrow(M,\delta,\gamma_k) \Rightarrow x_k \in \K_\downarrow(M,\delta).
    \end{equation}
\end{lem}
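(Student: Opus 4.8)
The plan is to prove the implication by directly unwinding the definition of the Pontryagin difference and then invoking the error bound. Since $\K^k_\downarrow(M,\delta,\gamma_k) = \K_\downarrow(M,\delta) \ominus \B_{p_2}^n(0,\gamma_k)$, the hypothesis $\hat{x}_k \in \K^k_\downarrow(M,\delta,\gamma_k)$ is, by the definition $\A \ominus \C = \{a \mid a \oplus \C \subseteq \A\}$, exactly the statement that $\hat{x}_k \oplus \B_{p_2}^n(0,\gamma_k) \subseteq \K_\downarrow(M,\delta)$. In other words, every translate $\hat{x}_k + b$ with $\norm{b}_{p_2} \leq \gamma_k$ already lies in $\K_\downarrow(M,\delta)$.

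Next I would relate $x_k$ to $\hat{x}_k$ using the error decomposition $\hat{x}_k = x_k - e_k$, equivalently $x_k = \hat{x}_k + e_k$. By hypothesis $\norm{e_k}_{p_2} \leq \gamma_k$, so the error vector satisfies $e_k \in \B_{p_2}^n(0,\gamma_k)$. Instantiating the particular translate $b = e_k$ in the inclusion above then yields $x_k = \hat{x}_k + e_k \in \hat{x}_k \oplus \B_{p_2}^n(0,\gamma_k) \subseteq \K_\downarrow(M,\delta)$, which is the desired conclusion.

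I do not expect a genuine obstacle here: the statement is essentially a one-line consequence of the defining property of erosion, which guarantees that eroding a set by a ball leaves enough ``margin'' to absorb any perturbation whose norm is bounded by that ball's radius. The only points requiring care are bookkeeping ones, namely getting the sign right in $x_k = \hat{x}_k + e_k$ so that the perturbation is added rather than subtracted, and confirming that the ball $\B_{p_2}^n(0,\gamma_k)$ (closed, per the paper's convention) indeed contains $e_k$ under the non-strict bound $\norm{e_k}_{p_2} \leq \gamma_k$. The nonemptiness clause $\K^k_\downarrow(M,\delta,\gamma_k) \neq \emptyset$ built into the definition plays no role in the implication itself; it merely ensures the statement is not vacuous and that a qualifying $\hat{x}_k$ can exist.
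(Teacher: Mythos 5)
Your proof is correct and follows essentially the same argument as the paper: both write $x_k = \hat{x}_k + e_k$ with $e_k \in \B_{p_2}^n(0,\gamma_k)$ and conclude via the erosion property, the only cosmetic difference being that you unwind the definition $\A \ominus \C = \{a \mid a \oplus \C \subseteq \A\}$ pointwise while the paper cites the equivalent set identity $(\X \ominus \Y) \oplus \Y \subseteq \X$. Your closing remarks on the sign convention and the role of the nonemptiness clause are also accurate.
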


\begin{proof}
    Regardless of the perturbation caused by the error, since $e_k \in \B_{p_2}^n(0,\gamma_k)$ we have that $x_k \in \{\hat{x}_k\} \oplus \B_{p_2}^n(0,\gamma_k)$. 
    By enforcing the condition $\hat{x}_k \in \K^k_\downarrow(M,\delta,\gamma_k)$ we guarantee that $x_k \in \left( \K_\downarrow(M,\delta) \ominus \B_{p_2}^n(0,\gamma_k) \right) \oplus \B_{p_2}^n(0,\gamma_k) \subseteq \K_\downarrow(M,\delta)$ since for any nonempty sets $\X$ and $\Y$ it holds true that $(\X \ominus \Y) \oplus \Y \subseteq \X$.
\end{proof}


For any given initial condition $x_0$ we trivially have $e_0 = 0$ and thus $\K^0_\downarrow(M,\delta,\gamma_0) = \K_\downarrow(M,\delta)$. We now describe a procedure to compute error bounds $\gamma_k$ which will be used for \emph{a priori} construction of the sets $\K^k_\downarrow(M,\delta,\gamma_k)$.

%
Using the identity $e^{As} = A_{\zeta,s}+E_{\zeta,s}$ in \eqref{E:SD_solution} yields
\begin{align}
    x_k 
    &= \underbrace{A_{\zeta,\delta} x_{k-1} + \biggl(\int_0^\delta A_{\zeta,\lambda} d\lambda \cdot B \biggr) u_{k-1}}_{\hat{x}_k} \notag \\
     &\qquad + \underbrace{E_{\zeta,\delta} x_{k-1} + \biggl(\int_0^\delta E_{\zeta,\lambda} d\lambda \cdot B \biggr) u_{k-1}}_{e_k}.
\end{align}
Back-substituting the solutions $x_{k-1}$ into $x_k$, and $\hat{x}_{k-1}$ into $\hat{x}_k$ for every $k=1,\dots,N_\delta$ we can rewrite $e_k = x_k - \hat{x}_k$ as
\begin{multline}\label{E:e_k_generic}
    e_k = \left( (A_{\zeta,\delta} + E_{\zeta,\delta})^k - A_{\zeta,\delta}^k \right) x_0
    + \sum_{i=0}^{k-1} \biggl[ \left((A_{\zeta,\delta} + E_{\zeta,\delta})^i - A_{\zeta,\delta}^i \right)\\
    \times \int_0^\delta A_{\zeta,\lambda} d\lambda +  (A_{\zeta,\delta} + E_{\zeta,\delta})^i \int_0^\delta E_{\zeta,\lambda} d\lambda  \biggr] B u_{k-1-i}.
\end{multline}
We can do so because for any discrete-time LTI system $x_{k+1} = \Phi x_k + \Psi u_k$, the solution $x_k$ can be written in terms of the initial condition $x_0$ and the past inputs as $x_k = \Phi^k x_0 + \sum_{i=0}^{k-1} \Phi^i \Psi u_{k-1-i}$.


To compute the upper-bound $\gamma_k$ on \eqref{E:e_k_generic}, invoke i) the inequality $\norm{A^k} \leq \norm{A}^k$ which holds for any matrix $A$ and positive constant $k$, ii) the multiplicative and triangular inequalities, and iii) the binomial expansion for $(A_{\zeta,\delta} + E_{\zeta,\delta})^k$ (which is valid since $A_{\zeta,s} E_{\zeta,s} = E_{\zeta,s} A_{\zeta,s}$ \cite{Standish1975179}):
\begin{align}
    \norm{e_k}
    %
    &\leq \norm{ \sum_{l=0}^k \binom{k}{l} A_{\zeta,\delta}^l E_{\zeta,\delta}^{k-l} - A_{\zeta,\delta}^k} \norm{x_0} \notag\\
    & + \sum_{i=0}^{k-1} \left[  \norm{ \sum_{l=0}^i \binom{i}{l} A_{\zeta,\delta}^l E_{\zeta,\delta}^{i-l} - A_{\zeta,\delta}^i } \norm{ \int_0^\delta A_{\zeta,\lambda} d\lambda } \right. \notag\\
    &  \left. +  \norm{ A_{\zeta,\delta} + E_{\zeta,\delta} }^i \norm{ \int_0^\delta E_{\zeta,\lambda} d\lambda } \right]  \sup_{u\in \U}\norm{Bu}.\label{E:ek_inequality_partial}
\end{align}
For $k=1$ this inequality is
\begin{equation}\label{E:e_1_bound}
    \norm{e_1} \leq \norm{ E_{\zeta,\delta} } \norm{x_0} + \norm{ \int_0^\delta E_{\zeta,\lambda} d\lambda } \sup_{u\in \U}\norm{Bu}.
\end{equation}
For $k>1$ we find
\begin{align}
    \norm{e_k}
    %
    &\leq \sum_{l=0}^{k-1} \binom{k}{l} \norm{A_{\zeta,\delta}^l}  \norm{E_{\zeta,\delta}}^{k-l} \norm{x_0} \notag\\
    & + \sum_{i=1}^{k-1} \left[  \sum_{l=0}^{i-1} \binom{i}{l} \norm{A_{\zeta,\delta}^l} \norm{E_{\zeta,\delta}}^{i-l} \norm{ \int_0^\delta A_{\zeta,\lambda} d\lambda } \right. \notag\\
    &  \left. +  \left(\norm{A_{\zeta,\delta}} + \norm{E_{\zeta,\delta}}\right)^i \norm{ \int_0^\delta E_{\zeta,\lambda} d\lambda } \right]  \sup_{u\in \U}\norm{Bu}. \label{E:e_k_bound}
\end{align}

The term $\int_0^\delta A_{\zeta,\lambda} d\lambda$ in \eqref{E:e_k_bound} is a definite integral of a finite sum and it can be easily computed:
%
\begin{equation}\label{E:integral_M}
    \int_0^\delta A_{\zeta,\lambda} d\lambda = \sum_{i=0}^\zeta \int_0^\delta \frac{(A \lambda)^i}{i!} d\lambda  = \sum_{i=0}^\zeta \frac{A^i \delta^{i+1}}{(i+1)!}.
\end{equation}
Evaluating the integral $\int_0^\delta E_{\zeta,\lambda} d\lambda$ is trickier. We can, however, compute an upper-bound on its $\infty$-norm. We will use the following property \cite{Liou_1966}:
%
\begin{align}
    \norm{E_{\zeta,\delta}}_\infty
    &\leq \frac{(\norm{A}_\infty \delta)^{\zeta+1}}{(\zeta + 1)!} \cdot \frac{1}{1-\varepsilon} =: \psi_\delta, \label{E:Es_upperbound}
\end{align}
where the discretization order $\zeta$ is chosen such that $\varepsilon := \frac{\norm{A}_\infty \delta}{\zeta+2} < 1$ (to ensure convergence of the power series $1 + \varepsilon +\varepsilon^2 + \cdots$). 
%
%
%
Similarly, we can derive
%
%
\begin{align}
    \norm{\int_0^\delta E_{\zeta,\lambda} d\lambda}_\infty &\leq \int_0^\delta \norm{E_{\zeta,\lambda}}_\infty d\lambda
    \leq \int_0^\delta \sum_{i = \zeta +1}^\infty \frac{\norm{A}_\infty^i \lambda^i}{i!} d\lambda \notag\\
    = &\sum_{i = \zeta +1}^\infty \int_0^\delta \frac{\norm{A}_\infty^i \lambda^i}{i!} d\lambda
    = \sum_{i = \zeta +1}^\infty \frac{\norm{A}_\infty^i \delta^{i+1}}{(i+1)!} \notag \\
    \leq &\frac{\norm{A}_\infty^{\zeta+1}\delta^{\zeta+2}}{\left( \zeta+2 \right)!} \cdot \frac{1}{1-\eta} \leq \psi_\delta  \cdot \frac{\delta}{\zeta+2}, \label{E:Es_integral_upperbound}
\end{align}
%
%
where $\eta = \varepsilon \cdot \bigl(1- \frac{1}{\zeta+3}\bigr)$. Note that $\varepsilon <1 \Rightarrow \eta <1$, which automatically ensures the convergence of the power series $1+\eta + \eta^2 + \cdots$ used in derivation of \eqref{E:Es_integral_upperbound}.

Substituting \eqref{E:integral_M}--\eqref{E:Es_integral_upperbound} into \eqref{E:e_1_bound}--\eqref{E:e_k_bound} for the $\infty$-norm we obtain a conservative bound $\tilde{\gamma}_k$ on $\norm{e_k}_\infty$ as
\begin{align}
    \tilde{\gamma}_1 &:= \psi_\delta \norm{x_0}_\infty + \left( \psi_\delta  \cdot \tfrac{\delta}{\zeta+2} \right)  \sup_{u\in \U}\norm{Bu}_\infty; \label{E:e_1_bound_star}\\
    \tilde{\gamma}_k &:= \sum_{l=0}^{k-1} \binom{k}{l} \norm{A_{\zeta,\delta}^l}_\infty  \psi_\delta^{k-l} \norm{x_0}_\infty \notag\\
    &\quad + \sum_{i=1}^{k-1} \Biggl[  \sum_{l=0}^{i-1} \binom{i}{l} \norm{A_{\zeta,\delta}^l}_\infty \psi_\delta^{i-l} \biggl\Vert \sum_{j=0}^\zeta \frac{A^j \delta^{j+1}}{(j+1)!} \biggr\Vert_\infty \notag\\
    &\quad +  \left(\norm{A_{\zeta,\delta}}_\infty + \psi_\delta \right)^i \psi_\delta  \cdot \tfrac{\delta}{\zeta+2} \Biggr]  \sup_{u\in \U}\norm{Bu}_\infty \label{E:e_k_bound_star}
\end{align}
for $k>1$. Using the upper-bounds \eqref{E:e_1_bound_star}--\eqref{E:e_k_bound_star} in Lemma~\ref{Lem:gamma_k} allows us to check for feasibility of a given point $x_0$ via only the finite-difference model, while ensuring that safety will not be violated due to discretization.


The bound $\tilde{\gamma}_k$ is asymptotically tight in the sense that for any $k$, $\lim_{\zeta \to \infty} \tilde{\gamma}_k = 0$. In practice, the chosen order of discretization $\zeta$ must be large enough so as to ensure convergence of the power series in derivation of \eqref{E:Es_upperbound} as well as non-emptiness of the eroded sets in Lemma~\ref{Lem:gamma_k}.


\subsubsection{Verifying Feasibility of $x_0$ via Forward Simulation}

We can now simply use the discretized model
\begin{equation}\label{E:DT_system}
    \hat{x}_{k+1} = A_{\zeta,\delta} \hat{x}_k + B_{\zeta,\delta} u_k
\end{equation}
with $B_{\zeta,\delta} := \int_0^\delta A_{\zeta,\lambda} d\lambda B$ to determine feasibility of a given initial condition $x_0$ without worrying about the discretization error or the inter-sampling behavior of the continuous trajectories of \eqref{E:SD_system} and their potentially negative impact on safety: If there exists a sequence of controls $\{u_k\}$ so that the nominal states $\hat{x}_k$ of the closed-loop system belong to the precomputed sets $\K^k_\downarrow(M,\delta,\tilde{\gamma}_k)$ as described above, then via Lemmas~\ref{Lem:gamma_k} and \ref{Lem:HSCC} the trajectories of~\eqref{E:SD_system} never exit $\K$.


Let us now construct the prediction equation as in ~\eqref{eqn:Prediction} (where we have used the notation $G x_0 + H \mathbf{u}$ to abbreviate the right-hand side of the equality), and formulate the finite horizon feasibility program
\begin{figure*}[!t]
    \begin{equation}\label{eqn:Prediction}
        \begin{bmatrix}
            \hat{x}_0\\ \hat{x}_1 \\ \hat{x}_2 \\ \vdots \\ \hat{x}_{N_\delta}
        \end{bmatrix}
        =
        \underbrace{\begin{bmatrix}
            I\\ A_{\zeta,\delta} \\ A_{\zeta,\delta}^2 \\ \vdots \\ A_{\zeta,\delta}^{N_\delta}
        \end{bmatrix}}_{G} x_0
        +
        \underbrace{\begin{bmatrix}
            0 & 0 & \dots & 0\\
            B_{\zeta,\delta} & 0 & \dots & 0\\
            A_{\zeta,\delta} B_{\zeta,\delta}  & B_{\zeta,\delta} & \dots & 0\\
            \vdots & \vdots & \dots & \vdots\\
            A_{\zeta,\delta}^{N_\delta-1} B_{\zeta,\delta}  & A_{\zeta,\delta}^{N_\delta-2} B_{\zeta,\delta} & \dots & B_{\zeta,\delta}\\
        \end{bmatrix}}_{H}
    \underbrace{\begin{bmatrix}
            u_0\\
            u_1\\
            \vdots\\
            u_{N_\delta-1}
        \end{bmatrix}}_{\mathbf{u}}
    \end{equation}
    \hrulefill
    \vspace*{4pt}
\end{figure*}
\begin{subequations}\label{E:optimization_general}
    \begin{align}
        \mini_{\mathbf{u}} \quad &0 \\
        \st \quad & \mathbf{u} \in \U^{N_\delta}\label{E:opt_control_constraint}\\
        &\hat{x}_k \in \K^k_\downarrow(M,\delta,\tilde{\gamma}_k), \quad k = 0,\dots,N_\delta \label{E:opt_state_constraint}\\
        &\left[\hat{x}_0 \; \cdots \; \hat{x}_{N_\delta}\right]^\top
        = G x_0 + H \mathbf{u}. \label{E:prediction_constraint}
    \end{align}
\end{subequations}



\begin{thm}\label{Thm:if_u_then_feasible}
    If $\exists \mathbf{u}^*$ satisfying \eqref{E:optimization_general}, then $x_0 \in \Viab_\T(\K)$.
\end{thm}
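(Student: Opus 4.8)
The plan is to chain together Lemmas~\ref{Lem:HSCC} and~\ref{Lem:gamma_k}, using the discretization-error bounds $\tilde{\gamma}_k$ as the connective tissue between the finite-difference model and the true continuous-time SD trajectory. First I would unpack what a feasible $\mathbf{u}^* = (u_0^*,\ldots,u_{N_\delta-1}^*)$ supplies. By constraint~\eqref{E:opt_control_constraint} it lies in $\U^{N_\delta}$, so its zero-order-hold extension $u^*(\cdot)$ is an admissible piecewise-constant signal in $\Ul_\T$, exactly the class required by Lemma~\ref{Lem:HSCC}. The nominal states $\hat{x}_0,\ldots,\hat{x}_{N_\delta}$ produced by $u^*$ through the prediction equation~\eqref{eqn:Prediction} are precisely the trajectory of the discretized model~\eqref{E:DT_system}, and by constraint~\eqref{E:opt_state_constraint} each of them satisfies $\hat{x}_k \in \K^k_\downarrow(M,\delta,\tilde{\gamma}_k)$.

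Next I would invoke the error analysis preceding the theorem. The quantities $\tilde{\gamma}_k$ of~\eqref{E:e_1_bound_star}--\eqref{E:e_k_bound_star} are, by construction, valid upper bounds on $\norm{e_k}_\infty$ for the given $x_0$ and \emph{any} admissible input sequence: they were obtained from the exact expression~\eqref{E:e_k_generic} by substituting $\norm{x_0}_\infty$ explicitly and by replacing each input-dependent term with its worst case $\sup_{u\in\U}\norm{Bu}_\infty$. Consequently the hypothesis $\norm{e_k}_{p_2}\le \gamma_k$ of Lemma~\ref{Lem:gamma_k} holds with $p_2=\infty$ and $\gamma_k = \tilde{\gamma}_k$, and the nonemptiness requirement on $\K^k_\downarrow(M,\delta,\tilde{\gamma}_k)$ is automatically satisfied since $\hat{x}_k$ witnesses membership. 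Lemma~\ref{Lem:gamma_k} then upgrades each nominal inclusion to a statement about the true sampled state, giving $x_k = x_{x_0}^{u^*}(t_k) \in \K_\downarrow(M,\delta)$ for every $k = 0,\ldots,N_\delta$.

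Finally, having established that the true SD trajectory satisfies $x_{x_0}^{u^*}(t_k) \in \K_\downarrow(M,\delta)$ at every sampling instant under the admissible control $u^*(\cdot)$, I would apply Lemma~\ref{Lem:HSCC} directly to conclude $x_0 \in \Viab_\T(\K)$. This closes the chain: feasibility of~\eqref{E:optimization_general} gives nominal states in the doubly-eroded sets, the error bound together with Lemma~\ref{Lem:gamma_k} pulls the true states back into the singly-eroded $\K_\downarrow(M,\delta)$, and Lemma~\ref{Lem:HSCC} converts safety at the sampling instants into genuine viability over all of $\T$.

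I expect the only real subtlety --- rather than a genuine obstacle --- to be the bookkeeping around the error bound: one must verify that $\tilde{\gamma}_k$ is a legitimate upper bound on $\norm{e_k}_\infty$ for this particular feasible $\mathbf{u}^*$, and not merely a heuristic estimate. Because the derivation uses the supremum over $\U$ for every input term and the exact value of $x_0$, the bound is uniform over all admissible control sequences and therefore applies to $\mathbf{u}^*$ in particular; making this uniformity explicit is the crux of the argument. The remaining work is purely a matter of aligning the quantifiers of the two lemmas --- the piecewise-constant control and the full range of sampling instants $k\in\{0,\ldots,N_\delta\}$ --- with what the feasibility program provides.
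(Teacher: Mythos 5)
Your proposal is correct and follows essentially the same route as the paper's proof: constraint~\eqref{E:opt_control_constraint} supplies an admissible piecewise-constant input, constraint~\eqref{E:opt_state_constraint} together with Lemma~\ref{Lem:gamma_k} pulls the true sampled states into $\K_\downarrow(M,\delta)$, and Lemma~\ref{Lem:HSCC} then yields $x_0 \in \Viab_\T(\K)$. Your added remark that the bounds $\tilde{\gamma}_k$ are uniform over all admissible input sequences (via the $\sup_{u\in\U}\norm{Bu}_\infty$ terms) is a detail the paper leaves implicit, but it is the same argument.
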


\begin{proof}
    The proof follows directly from Lemmas~\ref{Lem:HSCC} and \ref{Lem:gamma_k}. More specifically, the prediction equation \eqref{E:prediction_constraint}, for a fixed input sequence $\mathbf{u}$, generates a forward simulation of the finite-difference model \eqref{E:DT_system} over the desired horizon $[0,N_\delta] \cap \mathbb{Z}$ corresponding to the continuous time horizon $[0,N_\delta \delta] = \T$. Constraint \eqref{E:opt_control_constraint} ensures that this input sequence is point-wise admissible (meaning that every member of the sequence belongs to $\U$), while constraint \eqref{E:opt_state_constraint} restricts $\hat{x}_k$ so that the trajectory of \eqref{E:SD_system} evaluated at $t_k$ belongs to $\K_\downarrow(M,\delta)$ since via Lemma~\ref{Lem:gamma_k} $\hat{x}_k \in \K^k_\downarrow(M,\delta,\tilde{\gamma}_k) \Rightarrow x_k = x(t_k) \in \K_\downarrow(M,\delta)$. Lemma~\ref{Lem:HSCC} then automatically guarantees that $x(t)\in \K$ $\forall t\in \T$ which implies $x_0\in \Viab_\T(\K)$.
\end{proof}

The subroutine \textsc{Feasible} in Algorithm~\ref{alg:bisection_feasibility_search} employs Theorem~\ref{Thm:if_u_then_feasible} to determine the feasibility of a given sample point $x_0$. Its computational complexity is proportional to the complexity of~\eqref{E:optimization_general} which, with polytopic constraints for example, is simply a linear program (LP).\footnote{We note that the vast majority of the calculations for $\tilde{\gamma}_k$ from~\eqref{E:e_1_bound_star}--\eqref{E:e_k_bound_star} can be done once and ahead of time. Online, to be able to construct $\K_\downarrow^k$ in~\eqref{E:opt_state_constraint}, two simple operations (a multiplication by $\norm{x_0}$ and an addition) are all that is needed to form $\tilde{\gamma}_k$.}

\section{Conservatism and Convergence}\label{S:correct_converge}


\subsection{Algorithm Correctness}\label{subsec:Algorithm_Correctness}

\begin{thm}\label{Thm:algorithm_correctness}
Given convex sets $\K$ and $\U$ and an initial point $v_0 \in \Viab_\T(\K)$, $\V_N = \textsc{Polytopic-Approx}(\K, v_0, N)$ is a subset of $\Viab_\T(\K)$ $\forall N$.
\end{thm}

\begin{proof}
(By induction)  First, it is obvious that for $N=0$, $\V_0 = \textsc{Polytopic-Approx}(\K, v_0, 0)=\{v_0\} \subseteq \Viab_\T(\K)$, since we are given that $v_0 \in \Viab_\T(\K)$.

Next, assume that $\V_{N-1} = \textsc{Polytopic-Approx}(\K, v_0, N-1)=\conv(\{v_0, \ldots, v_{N-1}\}) \subseteq \Viab_\T(\K)$, and that we have $v_N = \textsc{Bisection-Feasibility}(v_0, b, \K)$ for some point $b \in \partial \K$.  Let $\V_N = \conv(\V_{N-1} \cup \{v_N \})$.  Since $\textsc{Bisection-Feasibility}$ only returns points which are inside $\Viab_\T(\K)$, we know $v_N \in \Viab_\T(\K)$.  Now since $\V_N$ is convex, $\forall x_0 \in \V_N$ $\exists x'_0 \in \V_{N-1}$ and $\exists \theta \in [0,1]$ s.t.\ $x_0 = \theta x'_0 + (1-\theta) v_N$. For $x'_0$ we know (by induction hypothesis) $\exists u_{x'_0}(\cdot)\in \Ul_{\T}$ s.t.\ $x'(t) = e^{At}x'_0 + \int_0^t e^{A(t-r)} B u_{x'_0}(r) dr \in \K$ $\forall t \in \T$. For $v_N$ we also know $\exists u_{v_N}(\cdot) \in \Ul_\T$ s.t.\ $x''(t) = e^{At} v_N + \int_0^t e^{A(t-r)} B u_{v_N}(r) dr \in \K \quad \forall t \in \T$. Therefore,
    \begin{align}
        \tilde{x}(t)
        &:= \theta x'(t) + (1-\theta) x''(t)= e^{At} \left( \theta x'_0 + (1-\theta) v_N \right) \notag\\
         &\qquad + \int_0^t e^{A(t-r)} B \left( \theta u_{x'_0}(r) + (1-\theta) u_{v_N}(r) \right) dr\notag \\
        &= e^{At} x_0 + \int_0^t e^{A(t-r)} B u_{x_0}(r) dr \in \K \;\; \forall t \in \T \label{E:safety_preserving_combo}
    \end{align}
    since $\K$ and $\U$ are convex and compact. Thus, $u_{x_0}(\cdot) = \theta u_{x'_0}(\cdot) + (1-\theta) u_{v_N}(\cdot) \in \Ul_{\T}$ is safety-preserving and $x_0 \in \Viab_\T(\K)$. Because $x_0$ was chosen arbitrarily in $\V_N$, we conclude that $\V_N \subseteq \Viab_\T(\K)$.
\end{proof}

\subsection{Algorithm Convergence} \label{subsec:Algorithm_Convergence}

One of the striking features of our algorithm is that it is random; additional points on the boundary of the polytopic approximation of $\Viab_\T(\K)$ are iteratively generated based on a random sampling.  This random nature is due to the fact that $\partial \Viab_\T(\K)$ is unknown \emph{a priori} (and is, in fact, what we are trying to estimate) so it is impossible to know what points to sample to construct a polytope that converges to $\Viab_\T(\K)$ as quickly as possible.  In fact, any algorithm which deterministically chose points for which to verify feasibility could be presented with a safe set $\K$ and system dynamics for which the algorithm would converge arbitrarily poorly.  This fact is related to results in the literature of estimating the volume of convex bodies using a \emph{separation oracle}\footnote{A separation oracle is a function that accepts as input a convex set and a point, and returns whether or not that point is inside the convex set.  In our algorithm $\textsc{Feasible}(x, \K)$ plays this role.}. More specifically, part of the literature on algorithms for estimating the volume of convex bodies states that it can be shown that for any algorithm that deterministically queries a separation oracle a polynomial number of times to build a polytopic approximation, the error (the difference in volume between the approximation and the true set) could be exponential in the number of dimensions~\cite{Barany1987}.  Random algorithms, on the other hand, can perform in a provably better manner~\cite{Dyer1991}.

To prove our algorithm's asymptotic convergence, first let $\overline{\Viab_\T}(\K)$ be the subset of the viability kernel we are actually attempting to approximate, i.e.
\begin{equation}
    \overline{\Viab_\T}(\K) := \{x_0 \mid \exists  u(\cdot)\in \Ul_\T,\, \forall k, \, x(t_k) \in \K_\downarrow(M, \delta) \}.
\end{equation}
While the true kernel contains all initial conditions for which a piecewise constant control keeps $x(t) \in \K$, the above set only encompasses initial conditions for which $x(t_k) \in \K_\downarrow(M, \delta)$ (which is a sufficient, but not necessary, condition to imply $x(t)\in \K$; cf.\ Lemma~\ref{Lem:HSCC}). Define the volumetric error between these two sets as
\begin{equation}\label{eqn:Definition_of_Epsilon_Cont}
	\epsilon_{\mathrm{cont}}(M\delta) := \vol(\Viab_\T(\K)) - \vol(\overline{\Viab_\T}(\K))
\end{equation}
and note that it depends only on the term $M\delta$, due to the definition of $\K_\downarrow(M, \delta)$ in \eqref{E:K_downarrowMd}.

Next, consider the output $\V_N$ of Algorithm~\ref{Alg:polytopic_approx}. Clearly, the accuracy of under-approximation of the viability kernel by the set $\V_N$ is implicitly dependant on the discretization order $\zeta$ (Section~\ref{subsubsec:Dealing_With_Discretization_Error}), and the accuracy $\epsilon$ of the bisection search (Algorithm~\ref{alg:bisection_feasibility_search}). To reflect this dependency, we adapt the extended notation $\V_N^{\zeta, \epsilon}$.\footnote{Theorem~\ref{Thm:algorithm_correctness}, restated in terms of the extended notation, asserts that $\V_N^{\zeta, \epsilon} \subseteq \Viab_\T(\K)$ $\forall \zeta, \epsilon, N$.} Evidently, for fixed values of $\zeta$ and $\epsilon$ as $N \to \infty$, this set only approximates a subset $\overline{\Viab_\T}(\K,\zeta,\epsilon)$ of the set $\overline{\Viab_\T}(\K)$:
\begin{align}
	\Limsup_{N \to \infty} \V_N^{\zeta, \epsilon} &= \overline{\Viab_\T}(\K,\zeta,\epsilon)\\
    \Limsup_{\zeta \to \infty, \epsilon \to 0} \overline{\Viab_\T}(\K,\zeta,\epsilon) &= \overline{\Viab_\T}(\K)
\end{align}
with $\Limsup$ denoting the Kuratowski upper-limit. We are now ready to present our algorithm's convergence property.

\begin{prop}[Rate of Convergence]\label{prop:Convergence}
    Let $\epsilon_{\mathrm{vol}}(N, \zeta, \epsilon)$ be the volumetric error between the viability kernel and the output of our algorithm, minus the error $\epsilon_{\mathrm{cont}}(M\delta)$ between the true kernel and $\overline{\Viab_\T}(\K)$:
    %
    \begin{equation*}
    	\epsilon_{\mathrm{vol}}(N, \zeta, \epsilon) := \vol(\Viab_\T(\K))
    	- \vol(\V_N^{\zeta, \epsilon}) - \epsilon_{\mathrm{cont}}(M\delta).
    \end{equation*}
    Then our algorithm converges as
    \begin{equation}\label{E:convergence_rate}
        \lim_{\substack{
        N \to \infty \\
        \zeta \to \infty \\
        \epsilon \to 0}} \epsilon_{\mathrm{vol}}(N, \zeta, \epsilon) N^\frac{2}{n-1} = c_n(\Viab_\T(\K), M\delta),
    \end{equation}
    where $c_n(\Viab_\T(\K), M\delta)$ is a constant dependent on the dimension $n$, the shape of the viability kernel (more specifically its Gauss-Kronecker curvature), and the value $M\delta$.
\end{prop}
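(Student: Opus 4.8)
The plan is to reduce the statement to a classical asymptotic result on the volume of random polytopes inscribed in a smooth convex body, the body in question being $\overline{\Viab_\T}(\K)$. First I would dispose of the $\zeta$ and $\epsilon$ dependence. Subtracting $\epsilon_{\mathrm{cont}}(M\delta)$ in the definition of $\epsilon_{\mathrm{vol}}$ replaces $\vol(\Viab_\T(\K))$ by $\vol(\overline{\Viab_\T}(\K))$, so that we are really measuring how well $\V_N^{\zeta,\epsilon}$ fills $\overline{\Viab_\T}(\K)$. For a fixed sample of $N$ directions, as $\zeta\to\infty$ the error bounds satisfy $\tilde{\gamma}_k\to 0$, so the eroded sets $\K^k_\downarrow$ expand to $\K_\downarrow(M,\delta)$ and the feasibility oracle converges to the exact membership test for $\overline{\Viab_\T}(\K)$; simultaneously, as $\epsilon\to 0$ the bisection search returns the exact radial boundary point. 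Hence each vertex $v_i$ converges to the unique point where the $i$th ray meets $\partial\overline{\Viab_\T}(\K)$, and by continuity of volume $\vol(\V_N^{\zeta,\epsilon})\to\vol(\overline{\V}_N)$, where $\overline{\V}_N$ is the convex hull of $N$ exact random boundary points. It therefore suffices to evaluate $\lim_{N\to\infty} N^{2/(n-1)}\,\mathbb{E}\bigl[\vol(\overline{\Viab_\T}(\K))-\vol(\overline{\V}_N)\bigr]$, the limit in \eqref{E:convergence_rate} being understood in expectation (an almost-sure version following from standard concentration estimates for random polytopes).

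Next I would identify the law of the vertices on $\partial\overline{\Viab_\T}(\K)$. The directions $r_d$ are drawn uniformly from $\S_2^{n-1}$, and since $\overline{\Viab_\T}(\K)$ is convex with $v_0$ in its interior, the radial map sending a direction to the unique boundary point of the ray $\{v_0+s r_d\}$ is a bijection onto $\partial\overline{\Viab_\T}(\K)$. Its pushforward is a probability measure absolutely continuous with respect to the surface measure, with a positive continuous density
\begin{equation*}
    f(x) \;\propto\; \frac{\cos\theta(x)}{\norm{x-v_0}_2^{\,n-1}},
\end{equation*}
where $\theta(x)$ is the angle between the outward normal at $x$ and the radial direction $x-v_0$; thus the vertices of $\overline{\V}_N$ are i.i.d.\ boundary points with density $f$.

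I would then invoke the classical asymptotics for inscribed random polytopes whose vertices are sampled on the boundary of a smooth convex body $L$ with density $f$: there is a universal constant $c_n^0$ such that
\begin{equation*}
    \lim_{N \to \infty} N^{\frac{2}{n-1}} \bigl( \vol(L) - \mathbb{E}\,\vol(\overline{\V}_N) \bigr) = c_n^0 \int_{\partial L} \kappa(x)^{\frac{1}{n-1}}\, f(x)^{-\frac{2}{n-1}}\, d\mu(x),
\end{equation*}
with $\kappa$ the Gauss--Kronecker curvature and $\mu$ the surface measure. Applying this with $L=\overline{\Viab_\T}(\K)$ and the density $f$ above yields \eqref{E:convergence_rate} and identifies $c_n(\Viab_\T(\K),M\delta)$ as the right-hand integral. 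This makes explicit the claimed dependencies: on the dimension $n$ (through $c_n^0$ and the exponents), on the shape of the kernel through its Gauss--Kronecker curvature $\kappa$, and on $M\delta$, since the integration domain $\partial\overline{\Viab_\T}(\K)$ and hence $\kappa$ and $f$ are determined by $\K_\downarrow(M,\delta)$.

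The main obstacle is regularity: the cited asymptotic requires $\partial\overline{\Viab_\T}(\K)$ to be $C^2$ with strictly positive Gauss--Kronecker curvature and $f$ continuous and positive. I would either impose this as a standing smoothness assumption on the kernel or argue it from the structure of the eroded reachability-type sets; where the boundary has flat or non-smooth pieces (which genuinely occur for polytopic $\K$) the curvature integrand degenerates, the local decay is faster than $N^{-2/(n-1)}$, and only the strictly curved part of $\partial\overline{\Viab_\T}(\K)$ contributes to the constant. A secondary technical point is the rigorous interchange of the $(\zeta,\epsilon)$ and $N$ limits, which requires a displacement bound on the vertices that is uniform in $N$; this follows from the asymptotic tightness $\tilde{\gamma}_k\to 0$ together with the $\epsilon$-accuracy guarantee of Algorithm~\ref{alg:bisection_feasibility_search}, but it is the part that needs the most care.
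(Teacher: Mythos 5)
Your proposal follows, in substance, the same route as the paper: both arguments reduce the claim to Sch\"utt's asymptotic theorem for random polytopes whose vertices are sampled on the boundary of a convex body (the paper's Lemma~\ref{lem:Convergence_Rate}), applied to $\overline{\Viab_\T}(\K)$ with the density obtained by pushing the uniform spherical law forward through the radial map (the paper's Lemma~\ref{Lem:homeomorphism} plus a change of variables). Two differences are worth recording. First, the order of limits: you send $\zeta \to \infty$, $\epsilon \to 0$ at fixed $N$ so that the vertices become exact boundary points of $\overline{\Viab_\T}(\K)$, and then let $N \to \infty$; the paper instead applies Sch\"utt's theorem at fixed $(\zeta,\epsilon)$ to the body $\overline{\Viab_\T}(\K,\zeta,\epsilon)$ and takes the $\zeta,\epsilon$ limits afterwards --- both readings of the joint limit need the interchange justification you flag. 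Second, and more substantively, the regularity obstacle you identify is resolved in the paper by a device absent from your proposal: Ghomi's theorem (Lemma~\ref{lem:Smooth_Approximations_of_Convex_Bodies}) that every compact convex set admits a $C^2$ inner approximation with arbitrarily small volume deficit $\epsilon_{\mathrm{smooth}}$, so that Sch\"utt's theorem is applied to the smooth surrogate and $\epsilon_{\mathrm{smooth}}$ is sent to zero, with no standing smoothness assumption on the kernel (the continuity of the constant in this limit is, admittedly, glossed over there too). Your fallback claim --- that flat or non-smooth pieces of $\partial\overline{\Viab_\T}(\K)$ only produce faster local decay, so that only the strictly curved part feeds the constant --- is not correct in general: for $n=2$ and a polygonal kernel, the gaps at the corners already scale as $N^{-2}$, which is exactly the critical rate $N^{-2/(n-1)}$, so non-smooth boundary points can contribute to (indeed dominate) the limiting constant even though the curvature integral vanishes; this is precisely why the paper routes through the smoothing lemma rather than through a curvature-only argument. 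On the other side of the ledger, your explicit pushforward density and the curvature-integral form of the limit are more informative than the paper's abstract constant $\tilde{c}_n$: they make visible that the constant also depends on $v_0$ and on the sampling law, which the paper acknowledges only in its remark that the rate is optimal up to a multiplicative constant depending on the sampling density, and your distinction between convergence in expectation and almost-sure convergence is more careful than the paper's identification $\mathbb{E}(g,N) = \vol(\V_N^{\zeta,\epsilon})$.
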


The proof requires some background on random algorithms for convex bodies and is provided in Appendix~\ref{S:appendix_convergence}.

Proposition~\ref{prop:Convergence} asserts that, for fixed dimension $n$, the volumetric error between the outcome of our algorithm and the true viability kernel asymptotically converges, at the exponential rate of $\hat{c}_n(\Viab_\T(\K))/N^\frac{2}{n-1}$, to a numerical constant due to the sampled-data nature of the system. On the other hand, to keep the accuracy of the approximation the same as $n \to kn$ we would need an increase of $N \to N^{\frac{kn-1}{n-1}}$. However, the fact that we only store samples on the boundary of the viability kernel to describe that set (as opposed to storing a grid of the entire set $\K$ and possibly beyond) requires significantly less memory than conventional approaches such as the SD level-set method in ~\cite{kaynama_NAHS2012}. The flexibility in choosing the number of samples strikes a direct tradeoff between accuracy and computational complexity, making our algorithm scalable to high dimensions. The computed approximation is far more accurate (and quite possibly more scalable) than the piecewise ellipsoidal technique also presented in \cite{kaynama_NAHS2012}.


Again, due to the results in~\cite{Schutt2003}, the above convergence rate is optimal (up to a multiplicative constant depending on the probability density function used for sampling); no other algorithm that approximates the kernel by sampling from its boundary will be able to converge at a faster rate.

\section{Computational Complexity \& Scalability}\label{S:complexity_scalability}

The run time complexity of our algorithm (for fixed number of sampling intervals $N_\delta$) is $\mathcal{O}(N \log(d) \Phi(n))$, where $N$ is the number of samples/vertices, $d$ is the ``diameter'' of the set $\K$, and $\Phi$ is the complexity of the feasibility program~\eqref{E:optimization_general} as a function of the state dimension $n$. That is, the algorithm runs in time linear in the number of samples $N$, logarithmic in the diameter $d$ of $\K$ due to complexity of the bisection search, and proportional to $\Phi$ in the complexity of the appropriate feasibility program~\eqref{E:optimization_general}. For instance, with polytopic constraints, the feasibility problem is an LP and thus the algorithm runs in time sub-cubic in $n$. This is a direct improvement over existing techniques for approximating the SD viability kernel. Furthermore, since each vertex is processed completely independently of others, our algorithms is highly parallelizable.

To demonstrate the scalability of our algorithm, consider the chain of $n$ integrators $d^n x / dt^n = u$ with constraints $\U = [-0.15, 0.15]$ and $\K = \{x \mid \norm{x}_\infty \leq 0.5 \}$. The state is measured every $\delta = 0.05\,\text{s}$ and safety is to be maintained over $\T = [0,1]$. We use a discretization order of $\zeta = 4$, bisection accuracy of $\epsilon = 0.01$ with maximum of three-level bisection depth, and employ YALMIP \cite{Yalmip_Lofberg04} to implement \eqref{E:optimization_general} and MPT \cite{KGBM04} for simple operations with polytopes. All of these parameters are kept constant as we increase the dimension $n$ and the number of samples $N= 2n$ to examine the scalability of our algorithm. The results are shown in Fig.~\ref{F:scalability_sampling_based}. The algorithm is implemented in MATLAB R2011b and tested on an Intel Core~i7 at $2.9\,\text{GHz}$ with $16\, \text{GB}$ RAM running $64$-bit Windows~7 Pro (without optimizing the code for speed).

\begin{figure}[t]
  \centering
        \includegraphics[clip=true, scale=0.55]{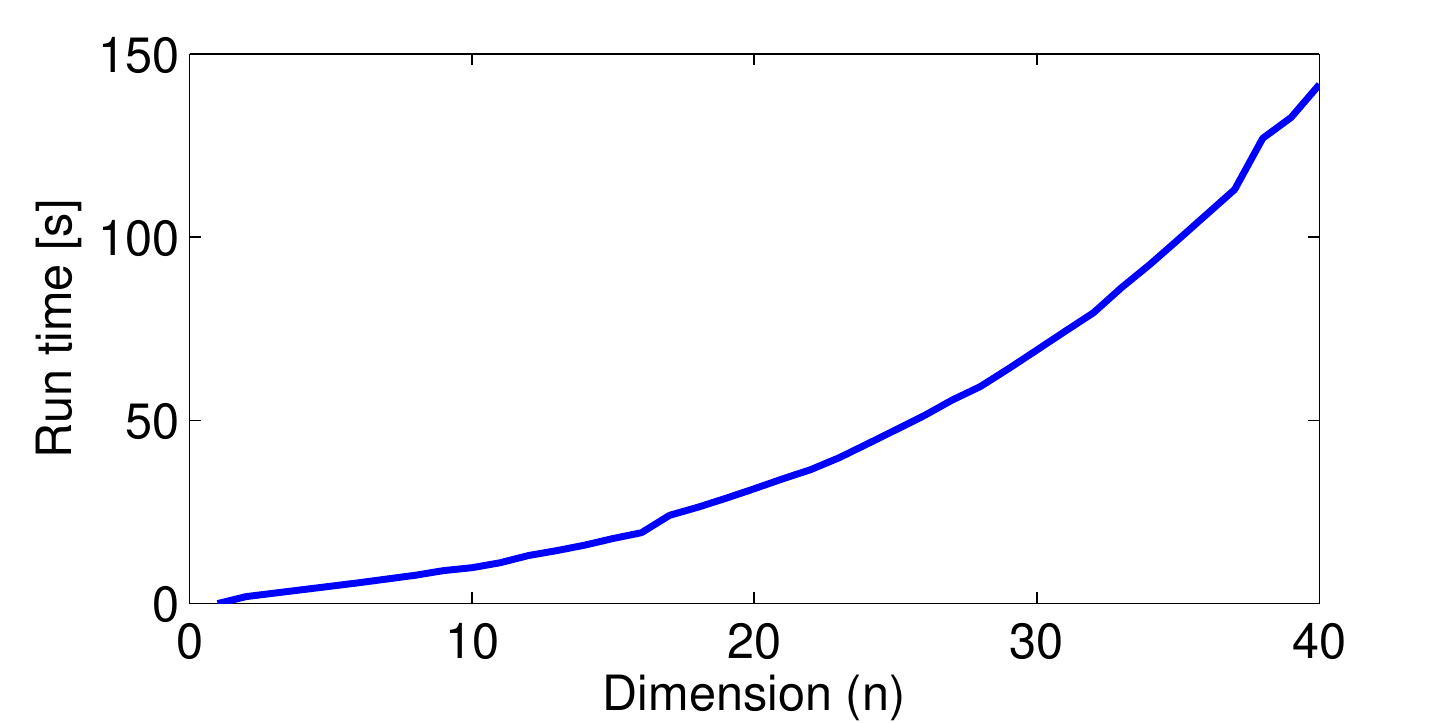}
        \caption{Run time of the algorithm for a chain of $n$ integrators.}
        \label{F:scalability_sampling_based}
\end{figure}




\section{Bounding the Error for Finite Number of Samples: Computing a Tight Over-Approximation}\label{S:over_approx}

Every convex set can be over-approximated by any finite collection of its \emph{support functions}. The support function of a convex compact set $\C \subset \Real^n$ along $\ell \in \Real^n$ is
\begin{equation}
    \rho_\C(\ell) := \max_{x\in \C} \tr{\ell} x.
\end{equation}
The half-space $\{x \mid \tr{\ell} x \leq \rho_\C(\ell)\}$ contains $\C$, and the hyper-plane $\{x \mid \tr{\ell} x = \rho_\C(\ell)\}$ is a supporting hyperplane for $\C$ with normal vector $\ell$ and distance value $\rho_\C(\ell)$. It follows that $\C \subseteq  \bigcap_{\ell \in \Lom} \{x \mid \tr{\ell} x \leq \rho_\C(\ell)\}$ with $\Lom$ a finite subset of $\Real^n$.
%

Let $r_{d_i}$ be the direction $r_d$ along which we have determined the vertex $v_i$ of the under-approximation set through the $i$th iteration of Algorithm~\ref{Alg:polytopic_approx}. It is easy to compute the support function of the set $\K$ along this direction: $\rho_\K(r_{d_i}) = \max_{x\in\K} \tr{r_{d_i}} x$. To find the supporting hyperplane of the true, unknown viability kernel (or rather some appropriate approximation of it) in the direction $r_{d_i}$, we move the hyperplane
\begin{equation}\label{E:support_function_plane}
    \{ x \mid \tr{r_{d_i}} x = \rho_\K(r_{d_i})\}
\end{equation}
on the interior of $\K$ until we find at least one feasible point on this plane that belongs to the kernel (or its over-approximation); see Fig.~\ref{fig:over_approx}.

\begin{figure}[t]
    \centering
    \scalebox{0.9}{\input{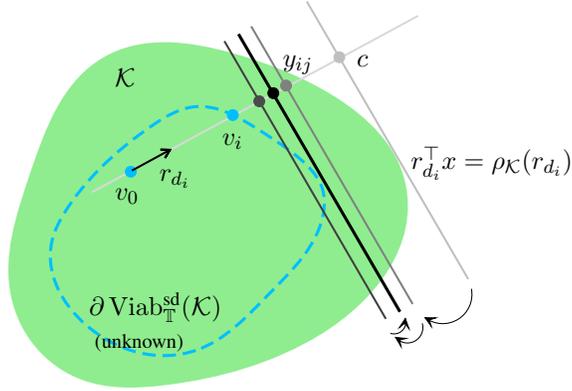}}
    \caption{An $\epsilon_\mathrm{o}$-accurate bisection search between $v_i$ and $c$ determines via \eqref{E:optimization_overapprox} the supporting hyperplane of the viability kernel.}
    \label{fig:over_approx}
\end{figure}

We do so iteratively by first performing an $\epsilon_\mathrm{o}$-accurate bisection search between the point $v_i$ and the point of intersection $c$ of the ray $\vec{r}_i$ passing through $v_i$ and $v_0$ with the hyperplane \eqref{E:support_function_plane}. This gives us points $y_{ij}$ indexed by each step $j$ of the new bisection search. We then solve the following modified convex feasibility program \emph{for every} $y_{ij}$ for the given direction $r_{d_i}$:
\begin{subequations}\label{E:optimization_overapprox}
    \begin{align}
        \mini_{\mathbf{u},x_0} \quad &0 \\
        \st \quad & \mathbf{u} \in \U^{N_\delta}\\
        &G x_0 + H \mathbf{u} \in \K^{N_\delta+1} \\
        &\tr{r_{d_i}} x_0 = \tr{r_{d_i}} y_{ij}.
    \end{align}
\end{subequations}
%

Notice that we no longer fix $x_0$; rather, we implicitly look for a point on the set $\{ x_0 \mid \tr{r_{d_i}} x_0 = \tr{r_{d_i}} y_{ij}\} \cap \K$, when $y_{ij}$ varies, that is a feasible point. We also do not erode the constraints as we did before since our goal here is find an \emph{over-approximation} of the true kernel.


Once a feasible solution to \eqref{E:optimization_overapprox} is found for the desired accuracy $\epsilon_\mathrm{o}$ of the bisection search, we stop the iterations and store two entities:
\begin{enumerate}[(S1)]
    \item The last \emph{infeasible} step, i.e.\ the last value of $y_{ij}$ for which \eqref{E:optimization_overapprox} was infeasible. Denote this value by $y_{ij}^{\mathrm{inf}*}$; \label{list:overapprox_infeasible}
    \item The feasible solution pair $(\mathbf{u}_i^*,x_{0i}^*)$ (indexed by $i$ to correspond to the direction $r_{d_i}$). \label{list:overapprox_feasible}
\end{enumerate}

We first use entity~(S\ref{list:overapprox_infeasible}) to form our over-approximation along $r_{d_i}$ as the halfspace
\begin{equation}
    \{x \mid \tr{r_{d_i}} x \leq \tr{r_{d_i}} y_{ij}^{\mathrm{inf}*} \}.
\end{equation}
Clearly the set $\{x \mid \tr{r_{d_i}} x = \tr{r_{d_i}} y_{ij}^{\mathrm{inf}*} \}$ is a supporting hyperplane of $\Viab_\T(\K)$ with an arbitrary (and desirably conservative) error $\epsilon_\mathrm{o}$. By repeating the above procedure for all $N$ directions and forming the set
\begin{equation}\label{E:overapprox_VN}
    \widehat{\V}^{\zeta,\epsilon_\mathrm{o}}_N := \bigcap_{i=1}^N \{x \mid \tr{r_{d_i}} x \leq \tr{r_{d_i}} y_{ij}^{\mathrm{inf}*}\}
\end{equation}
we obtain (Fig.~\ref{fig:over_approx_all})
\begin{equation}
    \V^{\zeta, \epsilon}_N \subseteq \Viab_\T(\K) \subseteq \widehat{\V}^{\zeta, \epsilon_\mathrm{o}}_N.
\end{equation}
The error $\vol(\Viab_\T(\K)) - \vol(\V^{\zeta, \epsilon}_N)$ of our main under-approximation algorithm can then be quantitatively bounded above as $\vol(\widehat{\V}^{\zeta, \epsilon_\mathrm{o}}_N) - \vol(\V^{\zeta, \epsilon}_N)$. This upper-bound monotonically decreases as $N$ increases, and converges to a numerical constant as $\epsilon,\epsilon_\mathrm{o} \to 0$ and $N,\zeta \to \infty$.

\begin{figure}[t]
    \centering
    \scalebox{0.9}{\input{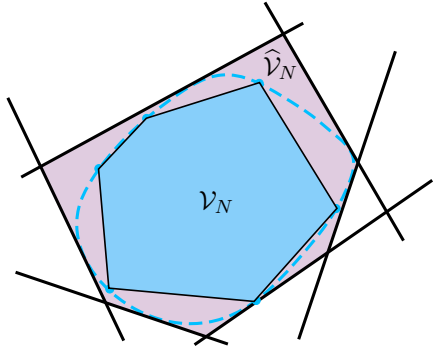}}
    \caption{The viability kernel is sandwiched in between the under-approximation $\V^{\zeta, \epsilon}_N$ and the over-approximation $\widehat{\V}^{\zeta, \epsilon_\mathrm{o}}_N$.}
    \label{fig:over_approx_all}
\end{figure}




The entity~(S\ref{list:overapprox_feasible}) from the feasibility program~\eqref{E:optimization_overapprox} can help us improve our under-approximation. This is discussed next.

\section{Improving the Under-Approximation}\label{S:improve_underapprox_bias}

We describe two techniques that help improve the quality of our under-approximation for finite number of samples. 

\subsection{Center of Mass \& Gauss-Kronecker Curvature}

If we were to naively sample from a uniform distribution on $\S_2^{n-1}$ to compute our under-approximating set, two elements would negatively affect the quality of such approximation: (a) the unbalanced distances between the unit ball centered at $v_0$ and the boundary points of the true kernel, i.e.\ the distance between $v_0$ and the center of mass of the kernel, and (b) the regions of the boundary of the kernel with high curvature. This is because a uniform distribution on $\S_2^{n-1}$ mapped onto $\partial \Viab_\T(\K)$ could yield a distribution that is far from uniform depending on severity of (a) or (b); Fig.~\ref{fig:gauss_distance_problem}.

\begin{figure}[t]
    \centering
    \scalebox{0.6}{\input{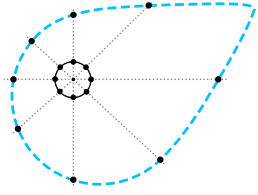}}
    \caption{Mapping a uniform distribution on $\S_2^{n-1} \subseteq \C$ linearly onto $\partial \C$ may result in a distribution that is non-uniform.}
    \label{fig:gauss_distance_problem}
\end{figure}

The former issue can be somewhat mitigated by continually moving the point $v_0$ to the center (e.g.\ in the sense of Chebyshev) of the set $\V_N^{\zeta,\epsilon}$ every time a new vertex is added. The latter issue can be addressed using the information obtained from the over-approximation procedure discussed in the previous section, specifically using the stored entity~(S\ref{list:overapprox_feasible}). The idea is that the greater the distance between the $i$th under-approximation vertex and over-approximation facet, the higher the Gauss curvature of the boundary of the true kernel in the neighborhood of that unexplored region.

To account for this problem, we perform an additional step after the $i$th iteration of our combined algorithm, every time a vertex is added to the under-approximation and an over-approximating halfspace is formed along the direction $r_{d_i}$: The stored value $x_{0i}^*$ in (S\ref{list:overapprox_feasible}) approximates the \emph{support vector} (the point in which the support function of a convex set touches its boundary) of the viability kernel in the direction $r_{d_i}$ with $\epsilon_\mathrm{o}$ accuracy. Therefore, we use $x_{0i}^*$ and execute a single instance of our under-approximation procedure this time along not $r_{d_i}$, but along the direction $x_{0i}^*- v_0$ (corresponding to the ray passing through $v_0$ and $x_{0i}^*$).

By doing so, we allow the over-approximation to ``guide'' where we should look for the next under-approximation vertex. Moreover, for this new instance of the under-approximation algorithm we can limit the bisection search to a diameter of roughly $2\epsilon_\mathrm{o}$ around $x_{0i}^*$ (instead of perfoming the search between $v_0$ and the point at which the new ray intersects $\partial\K$) since we know that $x_{0i}^*$ is already fairly close to the boundary of the true kernel. The resulting vertex is not only in close proximity to the point where the over-approximating halfplane at the $i$th iteration has been formed (thus providing a superior confidence that the viability kernel is sandwiched tightly in that area), but also covers the parts of the boundary that could potentially have a high Gauss curvature; Fig.~\ref{fig:over_approx_all_improved}.

\begin{figure}[t]
    \centering
    \scalebox{0.9}{\input{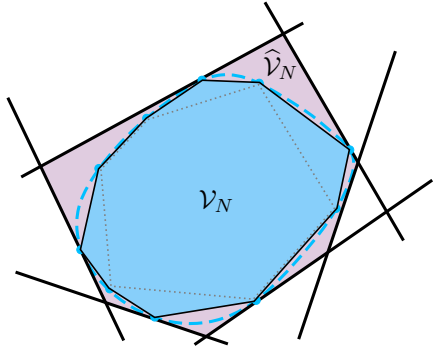}}
    \caption{Injecting additional under-approximation steps guided by the available over-approximation facets improves the quality of the resulting set (compare to Fig.~\ref{fig:over_approx_all}). Areas of the kernel with high curvature are now covered at a faster rate.}
    \label{fig:over_approx_all_improved}
\end{figure}


\subsection{Biased Random Sampling}\label{S:guiding}

The shortcomings of uniform sampling are more pronounced in high dimensions. Thus we additionally seek to \emph{bias} the distribution on the unit ball to mitigate these shortcomings. To this end, we present a few heuristic techniques that are still based on random sampling so as to keep the optimality results of Section~\ref{subsec:Algorithm_Convergence}, but could potentially improve the performance of the algorithm.

We will make use of the von-Mises Fisher (vMF) distribution \cite{Mardia_directionalstat} whose density function is given by
\begin{equation}
    f_{\mathrm{vMF}}(x; \mu, \kappa) := C(\kappa) e^{\kappa \mu^\top x}
\end{equation}
with \emph{concentration} $\kappa\geq 0$, mean direction $\mu$ ($\norm{\mu}=1$), and a normalizing constant $C(\kappa)$. The parameter $\kappa$  determines how samples drawn from this distribution are concentrated around the mean direction $\mu$. For $\kappa=0$ the vMF reduces to the uniform density; otherwise, it resembles a normal density (with compact support on the unit ball), centered at $\mu$ with variance inversely proportional to $\kappa$ (Fig.~\ref{fig:vMF_sample}). As $\kappa \to \infty$ the vMF converges to a point distribution.

\begin{figure}[t]
    \centering
    \scalebox{1}{\input{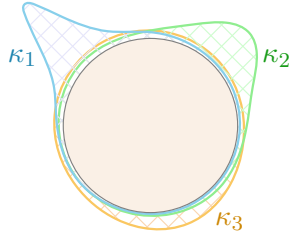}}
    \caption{The vMF density function on the unit ball for three different mean directions and concentrations with $\kappa_1 > \kappa_2 > \kappa_3$.}
    \label{fig:vMF_sample}
\end{figure}

The solutions we propose are by no means exhaustive, and there may be better ways of guiding the sampling process depending on the problem in hand or if we have some \emph{a priori} knowledge of the shape of the viability kernel.

\subsubsection{Gradient-Like Methods}

The first approach we discuss is related to how some measure of the error varies in consecutive steps of our combined under- and over-approximation algorithm. We let this change dictate from what vMF distribution should the next sample be drawn.

Let $\V^{\zeta,\epsilon}_i$ and $\widehat{\V}^{\zeta,\epsilon_\mathrm{o}}_i$ respectively denote the under-approximation and over-approximation of the viability kernel after adding the $i$th sample direction $r_{d_i}$, and let the state space be equipped with some metric $d$. Define the error function (possibly nonconvex) $\err \colon \Real^n \to \Real_{\geq 0}$ at this $i$th iteration as
\begin{equation}\label{E:error_dist}
    \err(r_{d_i}) := d(\V^{\zeta,\epsilon}_i, \, \widehat{\V}^{\zeta,\epsilon_\mathrm{o}}_i).
\end{equation}
Ideally, the metric $d$ is chosen such that $\err(r_{d_i})$ is monotonically non-increasing as $i$ increases (i.e.\ as we produce more accurate approximations). Define the normalized quantity
\begin{equation}
    \nabla\err(r_{d_i}) := \frac{1 -  d(\V^{\zeta,\epsilon}_i, \, \widehat{\V}^{\zeta,\epsilon_\mathrm{o}}_i)/ d(\V^{\zeta,\epsilon}_{i-1}, \, \widehat{\V}^{\zeta,\epsilon_\mathrm{o}}_{i-1})}{\frac{1}{\pi} \cos^{-1} \left( \inner{r_{d_i}, r_{d_{i-1}}} \right)},
\end{equation}
treating the pathological case $\frac{0}{0}$ as $0$. The magnitude of $\nabla\err(r_{d_i})$ approximates the rate of improvement between two consecutive iterations of our combined algorithm. Thus, we can use this information to draw the next sampling direction $r_{d_{i+1}}$ as
\begin{equation}
    r_{d_{i+1}} \sim f_{\mathrm{vMF}}(x; \mu_{i+1}, \kappa_{i+1})
\end{equation}
with mean and concentration parameters updated as
\begin{align}
        \kappa_{i+1} &= \max\left\{ \nu_0,\, \omega_i \right\},\\
        \mu_{i+1} &= \sgn( -\nu_0 + \omega_i) \cdot r_{d_i},
\end{align}
where
\begin{equation}
    \omega_i := \nu_1 \tanh\left( \nu_2 \nabla\err(r_{d_i}) \right).
\end{equation}
Here the positive scalars $\nu_0$, $\nu_1$, and $\nu_2$ are design choices; for example, $\nu_2$ dictates how quickly the tangent hyperbolic function reaches its maximum $\nu_1$ where it levels off, while $\nu_0$ (which is a small scalar) limits how close the resulting vMF distribution can be to the uniform one and, when necessary, sets the sign of the direction $\mu_{i+1}$ of the distribution to the negative of $r_{d_i}$.

The intuition behind such an update rule is that when sampling along a given direction $r_{d_i}$ causes the magnitude of $\nabla \err$ to become large (thus causing the error to decrease quickly), drawing the next sample from a distribution that is concentrated (proportional to this change) around the same direction may be a good choice to continue to reduce the error: $\kappa_{i+1} = \omega_i$ and $\mu_{i+1} = r_{d_i}$. If, on the other hand, the magnitude of $\nabla \err$ is small, we are approaching a local minima of the quasilinear function $\err(\cdot)$. Therefore, we would want to sample from a distribution that is \emph{not} concentrated around the current direction $r_{d_i}$, and could also be such that it potentially minimizes the likelihood of sampling around $r_{d_i}$. In such a case, we would bias the sampling to be slightly concentrated around the opposite direction (and also decrease the likelihood of resampling around $r_{d_i}$) so as to attempt to get us out of this local minima, while not making it to be too far off from a uniform distribution: $\kappa_{i+1} = \nu_0$ and $\mu_{i+1} = -r_{d_i}$.

%
%
%
%
%
%



\textbf{Volumetric Measure:} We can use volume as a metric in \eqref{E:error_dist} so that $\err(r_{d_i}) := \vol(\widehat{\V}^{\zeta,\epsilon_\mathrm{o}}_i) - \vol(\V^{\zeta,\epsilon}_i)$. Of course, computing the volume of a polytope is $\#$P-hard \cite{Dyer_vol_NPhard88}. However, we can conservatively approximate $\err(r_{d_i})$ by calculating analytically the volume of appropriate inscribed and circumscribed ellipsoids: A minimum volume circumscribed ellipsoid (mVCE) containing a polytope that is represented by its vertices can be computed efficiently via a semidefinite program \cite{Boyd_cvxbook}. The same is true for a maximum volume inscribe ellipsoid (MVIE) that is contained in a polytope represented by its facets. (Note that the converse problems are NP-hard.) If we computed the mVCE of $\V^{\zeta,\epsilon}_i$ and shrunk it by a factor of $n$ (the dimension), then the resulting ellipsoid would be a subset of $\V^{\zeta,\epsilon}_i$. Similarly, if we enlarged the MVIE of $\widehat{\V}^{\zeta,\epsilon_\mathrm{o}}_i$ by a factor of $n$, then $\widehat{\V}^{\zeta,\epsilon_\mathrm{o}}_i$ would be a subset of the resulting ellipsoid. The difference in the volume of these two ellipsoids would be an upper-bound on $\err(r_{d_i})$.

Working with the volume of these extremal ellipsoids does preserve order, in that the relations $\vol(\V^{\zeta,\epsilon}_{i+1}) \geq \vol(\V^{\zeta,\epsilon}_i)$ and $\vol(\widehat{\V}^{\zeta,\epsilon_\mathrm{o}}_{i+1}) \leq \vol(\widehat{\V}^{\zeta,\epsilon_\mathrm{o}}_i)$ are also true for their extremal ellipsoids. The reason is that adding constraints (additional vertices or facets) to a convex optimization problem (the SDPs) cannot decrease the value of the optimal objective functions (correlated with ellipsoid volumes). As a result, the error upper-bound, just like the error itself, is monotonically non-increasing as $i$ increases. A disadvantage of working with scaled extremal ellipsoids is that the shrinkage/enlargement by a factor of $n$ can be too conservative particularly for larger $n$.

\textbf{Hausdorff Distance:} The Hausdorff distance between two compact convex sets $\C_1$ and $\C_2$ in $\Real^n$ in terms of their support functions is defined as
\begin{equation}
    d_\mathrm{H}(\C_1,\C_2) := \max_{\ell \in \S^{n-1}} \left\{ \abs{\rho_{\C_1}(\ell) - \rho_{\C_2}(\ell)} \right\}.
\end{equation}
We can use this metric to define the error: $\err(r_{d_i}) = d_\mathrm{H}(\V^{\zeta,\epsilon}_i, \widehat{\V}^{\zeta,\epsilon_\mathrm{o}}_i)$. Computing this distance between $\V^{\zeta,\epsilon}_i = \conv(\{v_j\}_{j=0}^i)$ and $\widehat{\V}^{\zeta,\epsilon_\mathrm{o}}_i =: P_i x\leq p_i$ (where $P$ and $p$ respectively are the appropriate matrix and vector corresponding to the facet-based outer polytope defined in \eqref{E:overapprox_VN} but with $i$ faces) can be cast as a series of LPs: For a fixed direction $\ell \in \S^{n-1}$, we compute at every iteration $i$ 
\begin{subequations}
\begin{align}
    \rho_{\V^{\zeta,\epsilon}_i}(\ell) = \max_{x,\, \{\lambda_j \geq 0\}} \quad &\ell^\top x\\
    \st \quad &x = \sum_{j=0}^i \lambda_j v_j,  \:\;  \sum_{j=0}^i \lambda_j = 1,
\end{align}
\end{subequations}
as well as
\begin{subequations}
\begin{align}
    \rho_{\widehat{\V}^{\zeta,\epsilon_\mathrm{o}}_i}(\ell) = \max_{x} \quad &\ell^\top x\\
    \st \quad &P_i x \leq p_i.
\end{align}
\end{subequations}
The Hausdorff distance can then be approximated over a finite number of directions $\Lom \subset \S^{n-1}$ as
\begin{multline}\label{E:Hausdorff_approx}
    \hat{d}_\mathrm{H}(\V^{\zeta,\epsilon}_i, \widehat{\V}^{\zeta,\epsilon_\mathrm{o}}_i) := \max_{\ell \in \Lom} \left\{ \abs{\rho_{\widehat{\V}^{\zeta,\epsilon_\mathrm{o}}_i}(\ell) - \rho_{\V^{\zeta,\epsilon}_i}(\ell)} \right\}\\
    \leq \max_{\ell \in \S^{n-1}} \left\{ \abs{\rho_{\widehat{\V}^{\zeta,\epsilon_\mathrm{o}}_i}(\ell) - \rho_{\V^{\zeta,\epsilon}_i}(\ell)} \right\}
    = d_\mathrm{H}(\V^{\zeta,\epsilon}_i, \widehat{\V}^{\zeta,\epsilon_\mathrm{o}}_i).
\end{multline}
This approximation is not conservative due to the inequality in \eqref{E:Hausdorff_approx}. But as far as guiding the sampling, it may yield a viable alternative to the volumetric error described above (that could be excessively conservative). Unfortunately, performing these additional LPs at the end of each iteration of our algorithm may undermine its efficiency.



\subsubsection{Purely Heuristic Methods}\label{S:AOD_PtoP}

We also propose two purely heuristic methods as alternatives to the gradient-like methods presented above.

\textbf{Averaged Opposite Direction:} For this approach we associate a given vMF density function to the negative of \emph{each} individual direction vector we have generated so far, and draw our next sampling direction from a convolution of these density functions. That is, in some sense we are drawing at random a sample whose expected value lies in the opposite direction of the samples we have already drawn. The intuition here is that we can approach ``true'' uniformity at a higher rate by sampling in the direction whose neighborhood we have not yet sampled as densely.

We can vary the concentration of the vMF densities as an increasing function of the iteration step $i$ (e.g.\ linearly with the number of vertices), such that at the beginning steps of our algorithm these densities are closer to uniform, and as we progress and generate more vertices/directions their concentrations increase.\footnote{The density function of the convolution is closer to uniform if (a) the individual densities are closer to uniform, or (b) the existing directions all cancel each other out.} A design parameter $\nu_1$ can be a multiplier for the number of vertices. We cap the concentration of the distributions to 100 so as to not lose the randomness properties of the algorithm when $i$ grows too large. 

\textbf{Point-to-Plane Distance:} This approach is based on simply calculating the smallest distance from every vertex of $\V^{\zeta,\epsilon}_i$ to facets of $\widehat{\V}^{\zeta,\epsilon_\mathrm{o}}_i$, and identifying the largest of such distances. This quantity in some sense describes the gaps between the two sets over which we are most uncertain about the boundary of the true kernel.

Let $\bar{v}_j$ be the vertex in $\V^{\zeta,\epsilon}_i = \conv(\{v_j\}_{j=0}^{i})$ with the largest point-to-plane distance. To guide the sampling at the next iteration of the algorithm, we calculate the ray that passes through the center $v_0$ and the point in the facet of $\widehat{\V}^{\zeta,\epsilon_\mathrm{o}}_i$ that is the closest to $\bar{v}_j$. To determine our next sampling direction $r_{d_{i+1}}$, we then sample from a vMF distribution whose mean is the unit vector along this ray and whose concentration is dependant on the calculated point-to-plane distance from $\bar{v}_j$ to $\widehat{\V}^{\zeta,\epsilon_\mathrm{o}}_i$. A design parameter can again be a multiplier $\nu_1$ for this distance.

\subsubsection{Calibration and Comparison on Random Systems}\label{S:calibration_testing}

To compare the performance of the four guiding methods described above, we first roughly calibrated the design parameters for each approach by running each guiding technique on 30 randomly generated systems across 2D, 3D, and 4D state and input dimensions\footnote{Our tests were limited by 5D due to the need to directly compute the volume of polytopes for performance assessment.} for a variety of exponents of 10. For example, we tested the volume extremal ellipsoids and the Hausdorff distance methods for all combinations of $\nu_0=10^{\{-1,0\}}$, $\nu_1 = 10^{\{0, 1, 2\}}$, and $\nu_2 = 10^{\{0,1\}}/n$ (made dimension $n$ dependent so that the tangent hyperbolic reaches its maximum $\nu_1$ more slowly when $n$ is larger); and the averaged opposite direction and the point-to-plane distance methods for $\nu_1 = 10^{\{-2,-1,0,1\}}/n$ (made dimension dependent so that the vMF concentration increases more slowly when $n$ is larger). We found the optimal parameters to be $\nu_0=0.1$, $\nu_1 = 1$, $\nu_2 = 1/n$ for the volume extremal ellipsoids; $\nu_0=1$, $\nu_1 = 1$, $\nu_2 = 1/n$ for the Hausdorff distance; $\nu_1=0.1/n$ for the averaged opposite direction; and $\nu_1=1/n$ for the point-to-plane distance. We emphasize that this calibration was meant only to achieve parameters that were roughly the correct order of magnitude; we did no further fine tuning. 

With the calibrated methods in hand, we then ran all four methods on a fresh set of 60 randomly generated systems across 3D, 4D, and 5D state and input dimensions. We were then able to evaluate the performance of
each method by taking advantage of the fact that all the approximations
generated are under-approximations.  Thus, the under-approximation with
the largest possible volume must be (in a volumetric sense) the closest
to the true viability kernel.  Using this fact we found for each random
system and for a given number of vertices the best under-approximation
selected from the results of all four different methods (as well as the
basic uniform sampling method).  Then, for each system and number of
vertices we calculated the percent difference in volume between the best
under-approximation and the result of each method.  (Note that this
percent difference must be negative, since the result from each method
must have a volume smaller than or equal to the result of the best
method.)  Thus a method with a less negative value, i.e. closer to 0,
indicates a method that performs better, in a volumetric sense, than a
method with a more negative value (which indicates a much smaller and
thus less accurate under-approximation in a volumetric sense).  Fig.~\ref{F:heuristic_comparison}
plots the resulting percent volume difference by method as a function of
the number of vertices, averaged over the 60 randomly generated systems.


On average, the averaged opposite direction (blue) outperforms all methods. More specifically, it improves the resulting under-approximation volume by about 5\% over the uniform sampling (red). The Hausdorff distance (yellow) performs consistently better than uniform sampling but since we only compute a crude approximation of the actual Hausdorff distance in \eqref{E:Hausdorff_approx} (at every step we only use the directions along which we have already generated a vertex so as to maintain monotonicity of the distance approximate while keeping the computation times on par with the other methods), this improvement is expected to be more emphasized for higher number of samples than only 50. The volume extremal ellipsoids (green) initially performs far better than uniform sampling in 3D and 4D, but then quickly degrades in 5D due to shrinkage/expansion of the ellipsoids by a factor of $n$ that is much more exaggerated in higher dimensions. The point-to-plane distance (cyan) simply does not show any improvement over uniform sampling, at least on average in our test setup. We will use the averaged opposite direction as our guiding method of choice for our examples in Section~\ref{S:examples}. However, we do emphasize again that the performance of these heuristics is problem-dependent. Therefore, while in our limited random tests one method might have outperformed others on average, it may be that another method is even more suitable and performs significantly better for a given problem.

\begin{figure}[t]
    \centering
    \includegraphics[clip = true, scale=0.55]{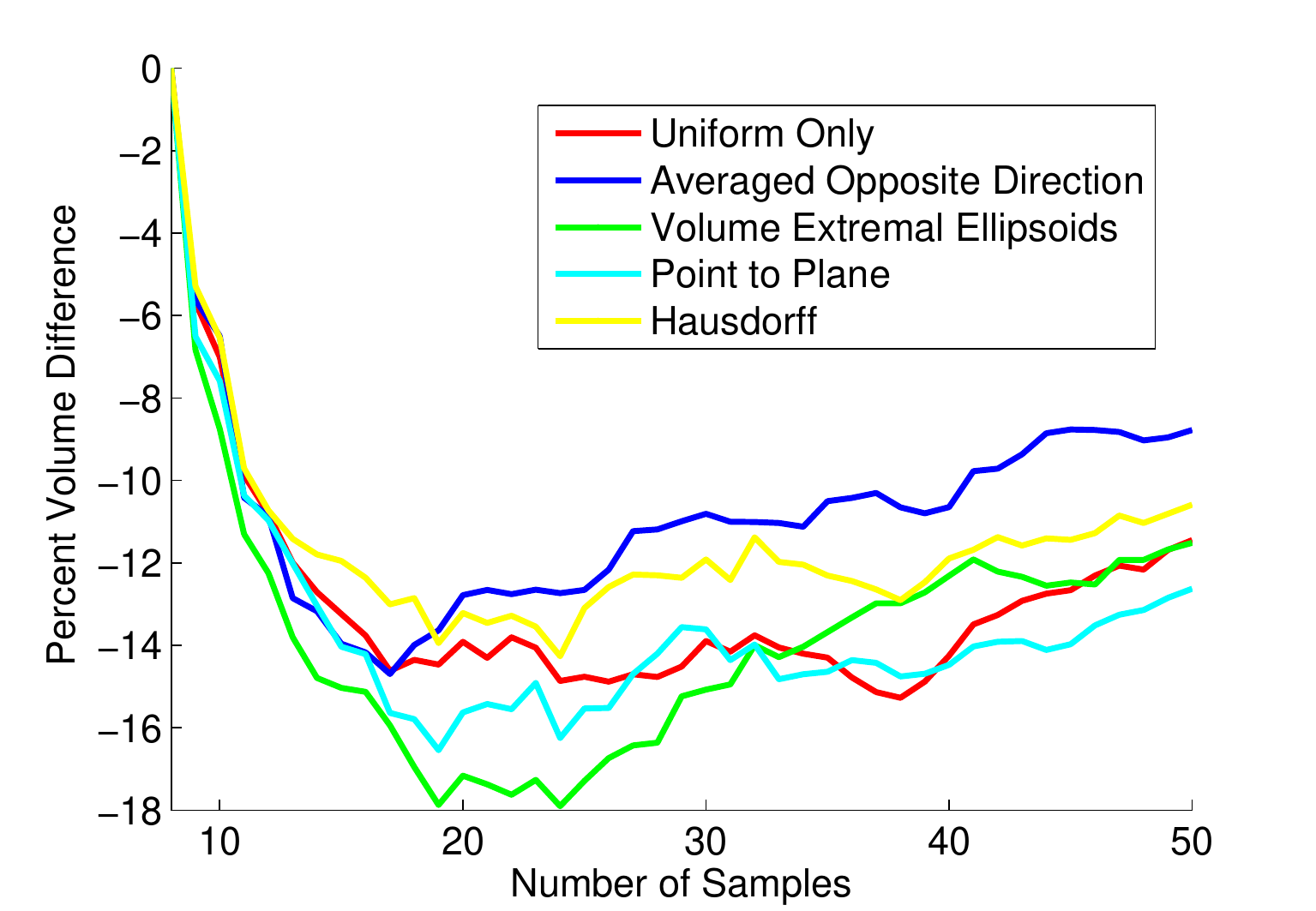}
    \caption{Percent difference in volume between the result of each guiding method and the best result out of all the guiding methods for a given number of vertices (averaged over 60 randomly generated systems across 3, 4, and 5 state and input dimensions).}
    \label{F:heuristic_comparison}
\end{figure}

\section{Examples}\label{S:examples}

\subsection{The Double-integrator}

First, consider the simple dynamics $\ddot{x}=u$ with $\delta = 0.05\,\text{s}$. The constraints $\K = \{ x\mid \norm{x}_\infty \leq 0.5\}$ and $\U = [-0.15,0.15]$ are to be respected over $\T=[0,1]$. To find a bound $M$ on the vector field and construct the eroded set $\K_\downarrow(M,\delta)$ we use the following procedure: We first scale the state space by computing for every dimension $d$ the quantity $z^*(\mathbf{e}_d) - z^*(-\mathbf{e}_d)$ with $z^*(\mathbf{e}_d) := \argmax_z \mathbf{e}_d^\top z$ subject to $z=\dot{x}$ and $x\in\K$, $u\in \U$, and where $\mathbf{e}_d$ denotes the standard basis vector spanning $d$th dimension. Then we divide all such quantities by their minimum value among all dimensions, and transform the dynamics accordingly so that a state increment in all dimensions is equivalent. At this stage, we can calculate $M$ as the optimal value of $\max_{x,u}\norm{\dot{x}}_\infty$ subject to $x\in\K$ and $u\in\U$. The eroded set $\K_\downarrow(M,\delta)$ is now constructed in the scaled state space according to Lemma~\ref{Lem:HSCC}.


We used $\zeta=4$th order discretization and $\epsilon = \epsilon_\mathrm{o} = 0.01$-accurate bisection search to obtain the under- and over-approximations shown in Fig.~\ref{F:integrator}. The under-approximation is computed using $N=20$ randomly generated samples via Algorithm~\ref{Alg:polytopic_approx}. The sampling process was guided via the techniques in Section~\ref{S:improve_underapprox_bias}. The over-approximation polytope consists of $N/2$ facets and is computed according to Section~\ref{S:over_approx}. The overall computation time was $18\,\text{s}$.

\begin{figure}[t]
    \centering
    \includegraphics[clip = true, scale = 0.55]{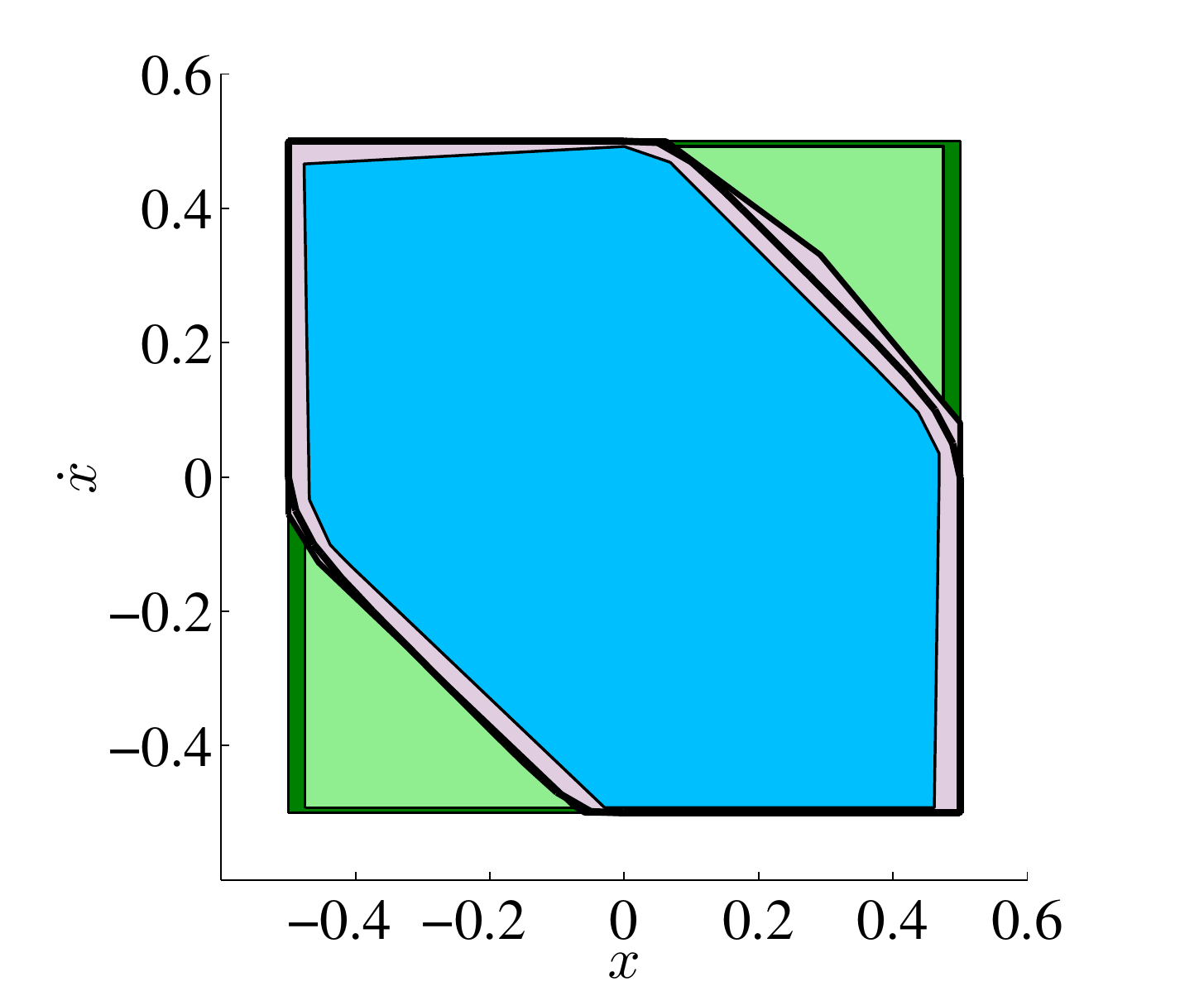}
    \caption{Polytopic under-approximation $\V^{\zeta,\epsilon}_N$ (blue) and over-approximation $\widehat{\V}^{\zeta,\epsilon_\mathrm{o}}_{N/2}$ (lavender) of $\Viab_\T(\K)$ for the double-integrator example with $N=20$ samples. The sets $\K$ and $\K_\downarrow(M,\delta)$ are shown in dark and light green, respectively. The SD level-set approximation ~\cite{kaynama_NAHS2012} is also shown (outlined in thick black line).} \label{F:integrator}
\end{figure}


\subsection{12D Quadrotor Flight Envelope Protection}

We now evaluate our algorithm on the benchmark example described in \cite{kaynama_benchmark}. Consider the full-order model of a quadrotor based on the nonlinear Newton-Euler rigid body equations of motion. The state vector
\begin{equation}
    x = \tr{\begin{bmatrix}
    \mathrm{x} & \mathrm{y} & \mathrm{z} & \dot{\mathrm{x}} & \dot{\mathrm y} & \dot{\mathrm z} & \phi & \theta & \psi & \dot{\phi} & \dot{\theta} & \dot{\psi}
    \end{bmatrix}} \in \Real^{12}
\end{equation}
is comprised of translational positions in $[\text{m}]$ with respect to a global origin, their derivatives (linear velocities in $\mathrm x$, $\mathrm y$, $\mathrm z$ directions) in $[\text{m}/\text{s}]$, the Eulerian angles roll $\phi$, pitch $\theta$, and yaw $\psi$ in $[\text{rad}]$, and their respective derivatives (angular velocities) in $[\text{rad}/\text{s}]$. The control input is the vector $u = \left[u_1 \; u_2 \; u_3 \; u_4 \right]^\top \in \Real^4$ consisting, respectively, of the total thrust in $[\text{m}/{\text{s}^2}]$ normalized with respect to the mass of the quadrotor ($u_1$) and the second-order derivatives $\ddot{\phi}$, $\ddot{\theta}$, $\ddot{\psi}$ of the Eulerian angles in $[\text{rad}/{\text{s}^2}]$ ($u_2$ through $u_4$). The system is under-actuated since there are six degrees of freedom but only four actuators. By linearizing the equations of motion about the hover condition $\phi = 0$, $\theta = 0$, and $u_1 = g$ (with $g \approx 9.81$ being the acceleration of gravity) one would obtain the model $\dot{x} = Ax+Bu$ with the state and the input now representing \emph{deviation} from the equilibrium. We follow the example detailed by~\cite{Cowling_quadrotor_paper} of an agile quadrotor in which the state is sampled at a frequency of $10 \,\text{Hz}$. (See the same reference for values of the system matrices $A$ and $B$.)


For safe operation of the vehicle the Eulerian angles $\phi$ and $\theta$ and the speed profile $V := \norm{\left[ \dot{\mathrm{x}} \; \dot{\mathrm y} \; \dot{\mathrm z} \right]}$ are bounded as $\phi,\theta \in [-\frac{\pi}{4}, \frac{\pi}{4}]$ and $V \leq 5$. The angular velocities are constrained as $\dot{\phi},\dot{\theta},\dot{\psi} \in [-3,3]$. We further assume that the vehicle must safely fly within the range of $1$ to $7 \, \text{m}$ above the ground in $\textrm{z}$ direction in an environment that stretches $6\,\text{m}$ in each direction in the $\mathrm{x}$-$\mathrm{y}$ plane. These constraints form the flight envelope $\K$. The vector $u$ is constrained by the hyper-rectangle $\U:=[-g, 2.38] \times [-0.5, 0.5]^3$. 

The quadrotor can travel a distance of roughly half a meter between any two consecutive sampling times despite the relatively high sampling frequency. This fact further warrants the treatment of such a safety-critical system through a SD framework. We wish to compute the set of initial states for which safety can be maintained over $\T=[0,2]$.


We warm-start our approximation algorithm (in the scaled state space) by first sampling along all axes in order to obtain a full-dimensional object, and then along 72 uniformly-spaced vectors in $x_i$-$x_{i+3}$ and $x_{i+6}$-$x_{i+9}$ subspaces for $i=1,2,3$. The remaining samples are generated randomly by guiding the vMF sampler via the averaged opposite direction method. The discretization order and the bisection search accuracies are the same as in the previous example.


Fig.~\ref{F:quad12D} shows selected 2D projections of our sampling-based polytopic approximations of $\Viab_\T(\K)$ in the original unscaled state space. The algorithm requires about $3\,\text{s}$ (without optimizing the code for speed) to generate a new vertex of the under-approximation and $5\,\text{s}$ to generate a facet of the over-approximation. The under-approximation is tight in the sense that each vertex of the polytope belongs to the boundary of the true viability kernel with some \emph{a priori} known accuracy due to the SD nature of the problem. The over-approximation is also tight in that each facet of the polytope touches the boundary of the true kernel in at least one point. The two approximating sets sandwich the boundary of the viability kernel to within a certain precision in at least half of the number of vertices of the under-approximation, providing an added layer of confidence about the precise location of the kernel. The tightness of the sets are unfortunately unobservable in the projection plots.


\newcommand{\constt}{0.25}
\newcommand{\spcc}{\hspace{0pt}}
\begin{figure*}[t]
    \centering
    \includegraphics[clip = true, width=\constt\textwidth]{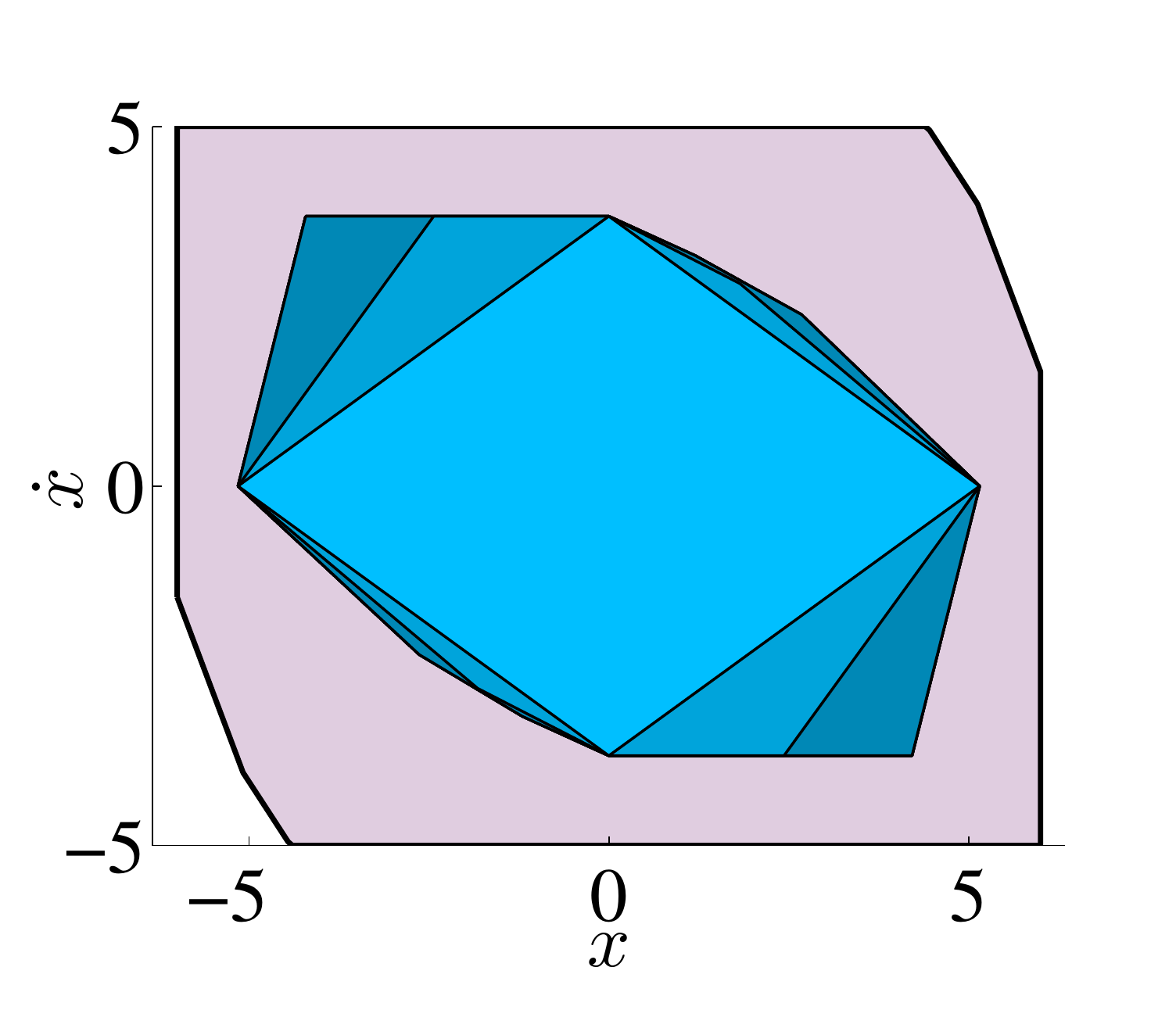}\spcc
    \includegraphics[clip = true, width=\constt\textwidth]{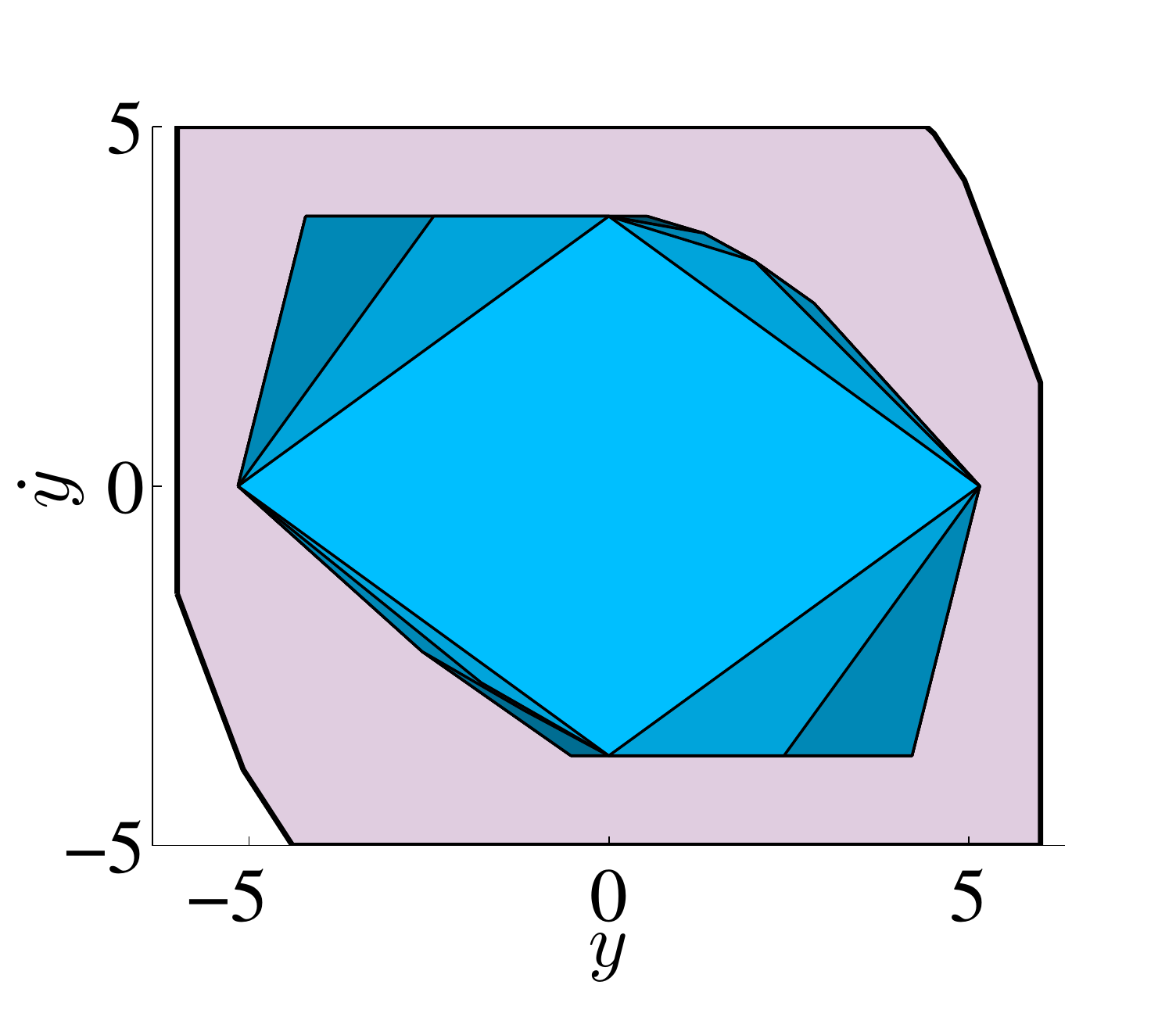}\spcc
    \includegraphics[clip = true, width=\constt\textwidth]{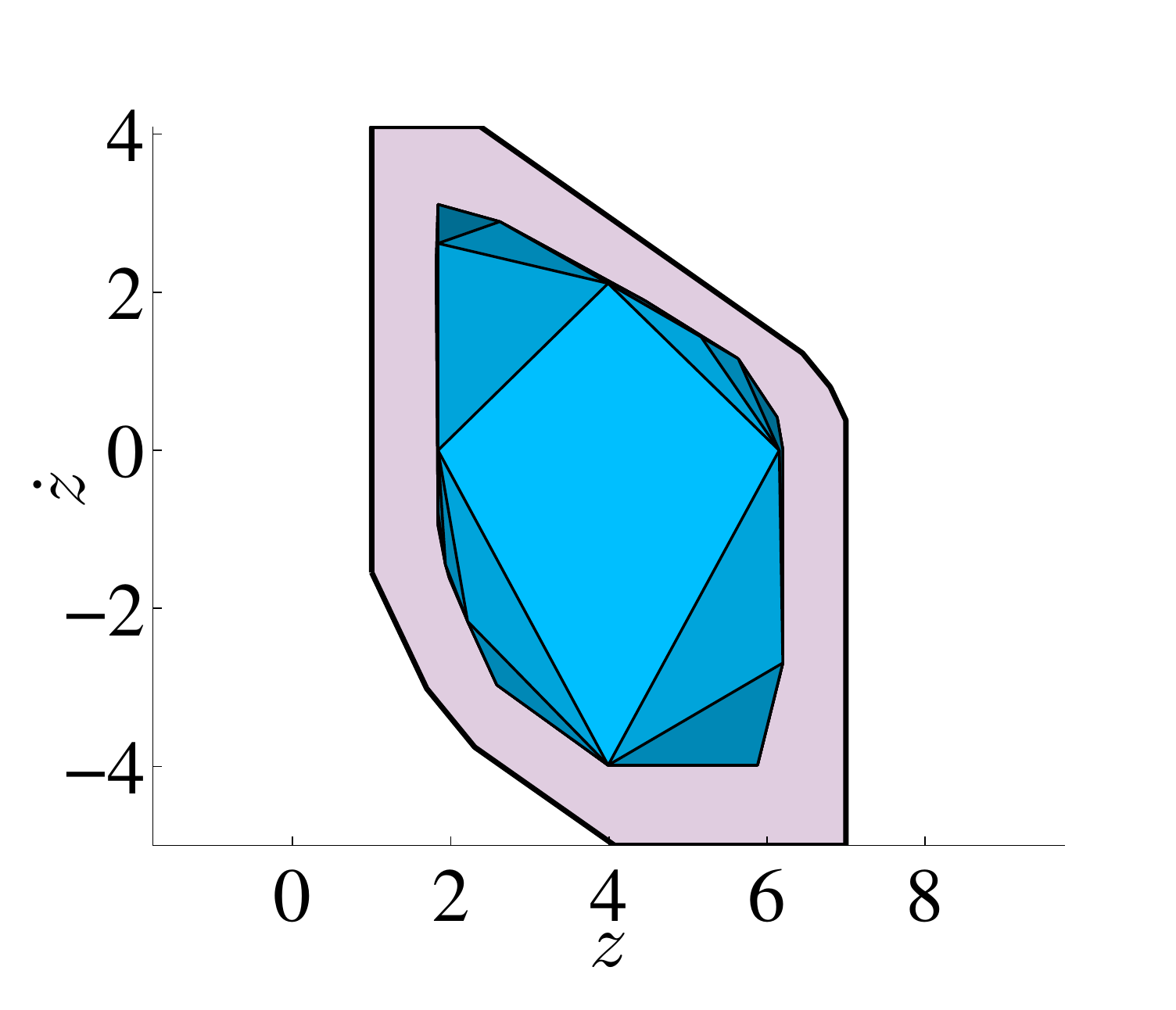}\spcc
    \includegraphics[clip = true, width=\constt\textwidth]{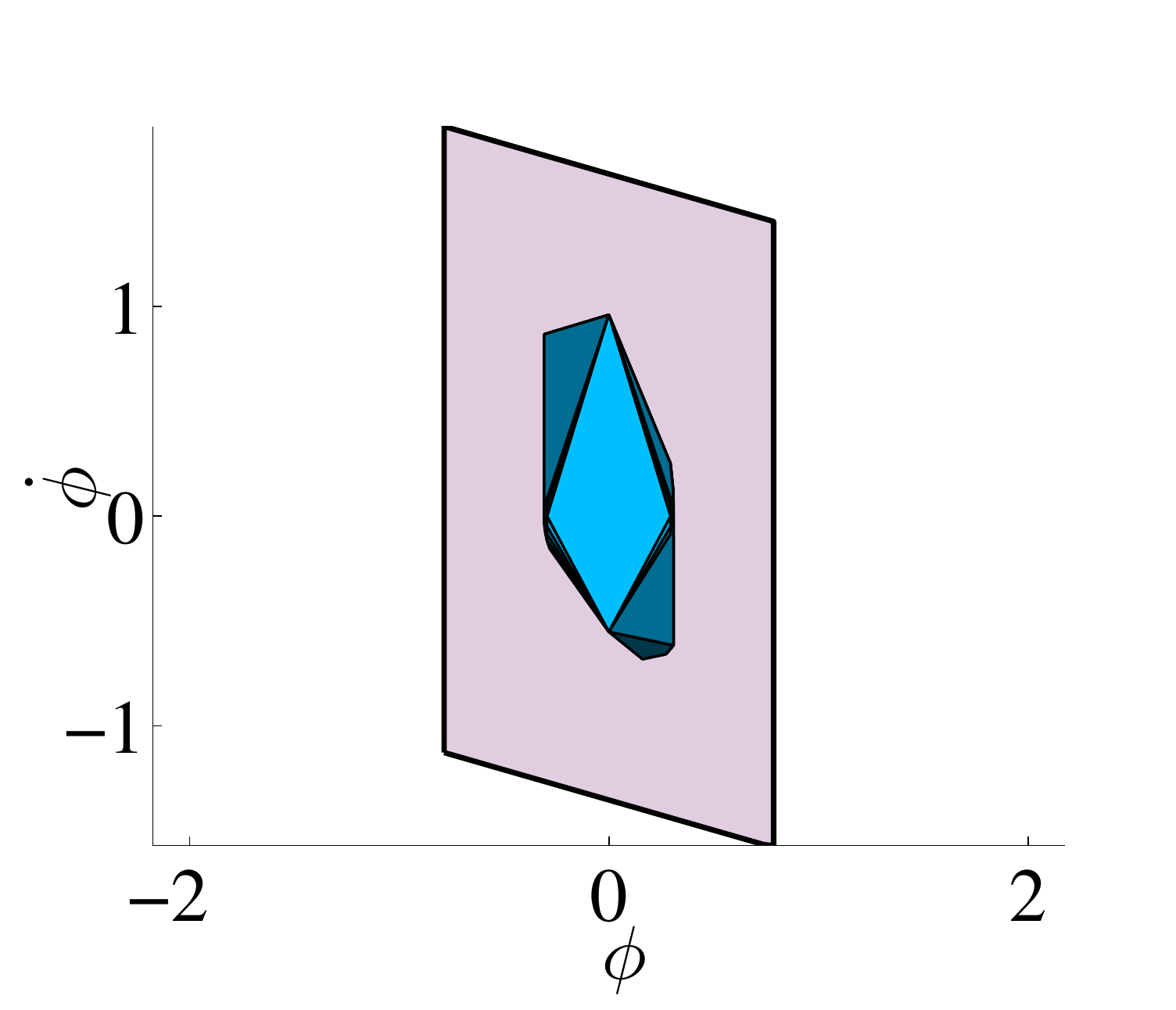}\\
    \includegraphics[clip = true, width=\constt\textwidth]{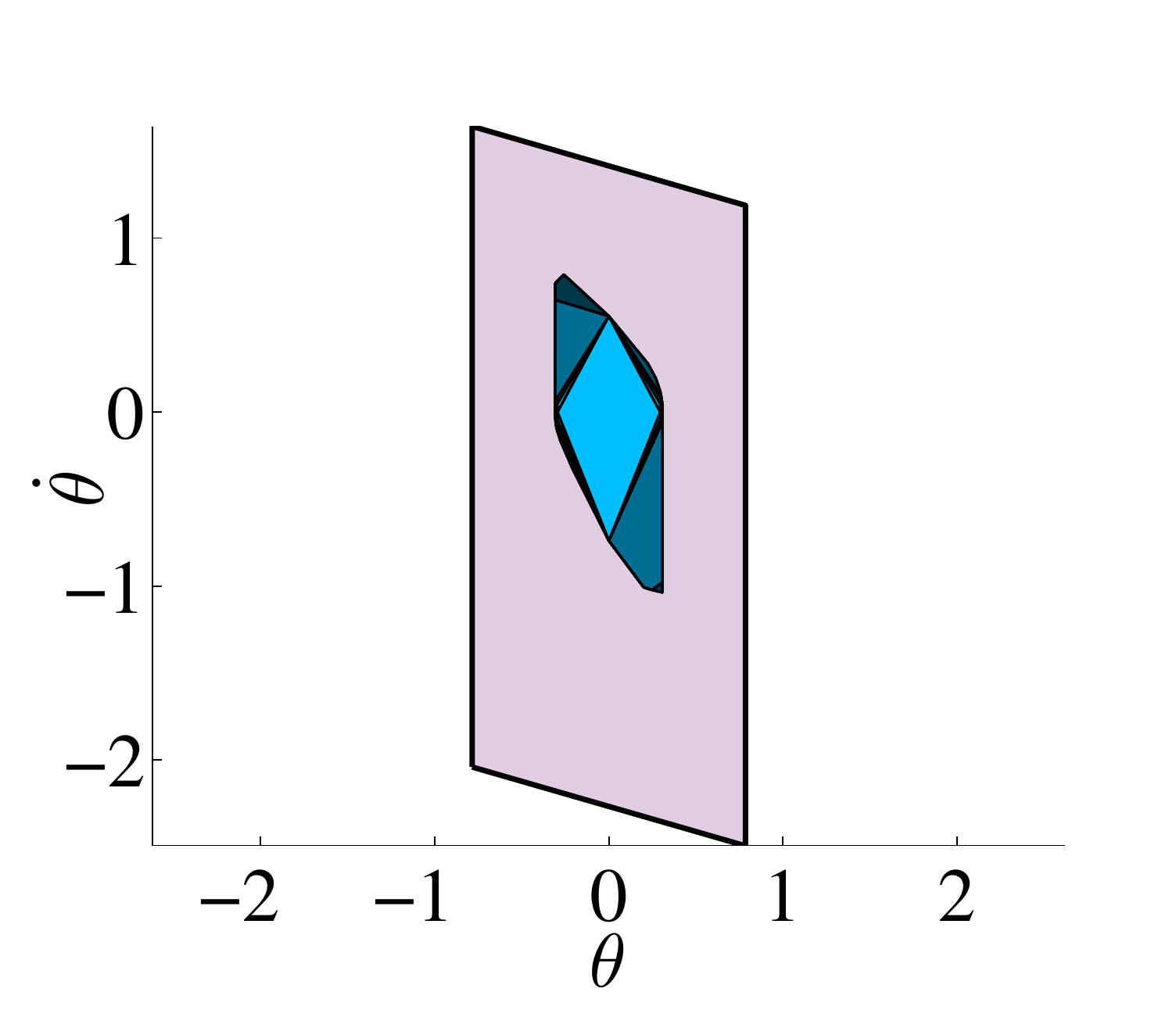}\spcc
    \includegraphics[clip = true, width=\constt\textwidth]{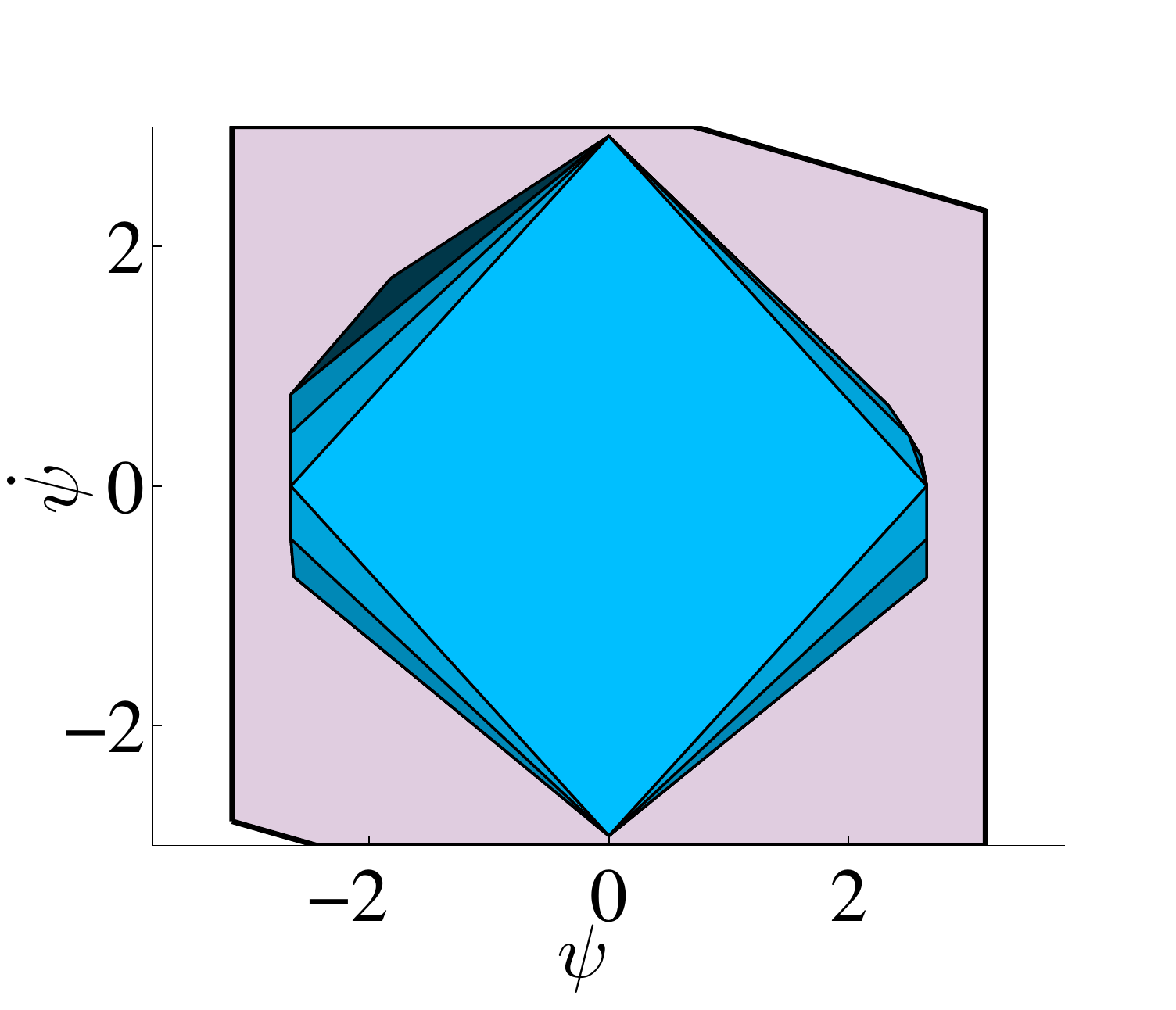}\spcc
    \includegraphics[clip = true, width=\constt\textwidth]{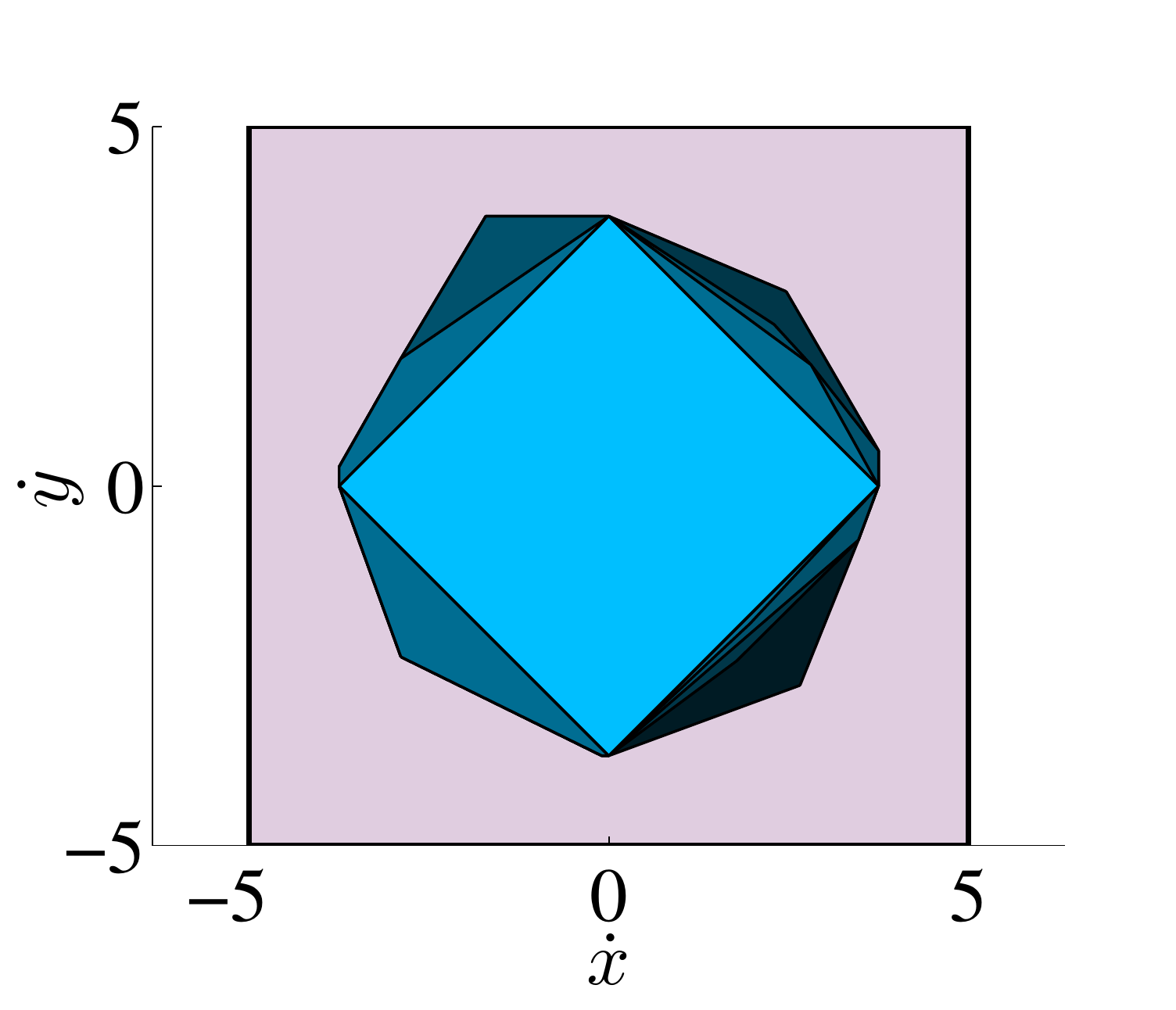}\spcc
    \includegraphics[clip = true, width=\constt\textwidth]{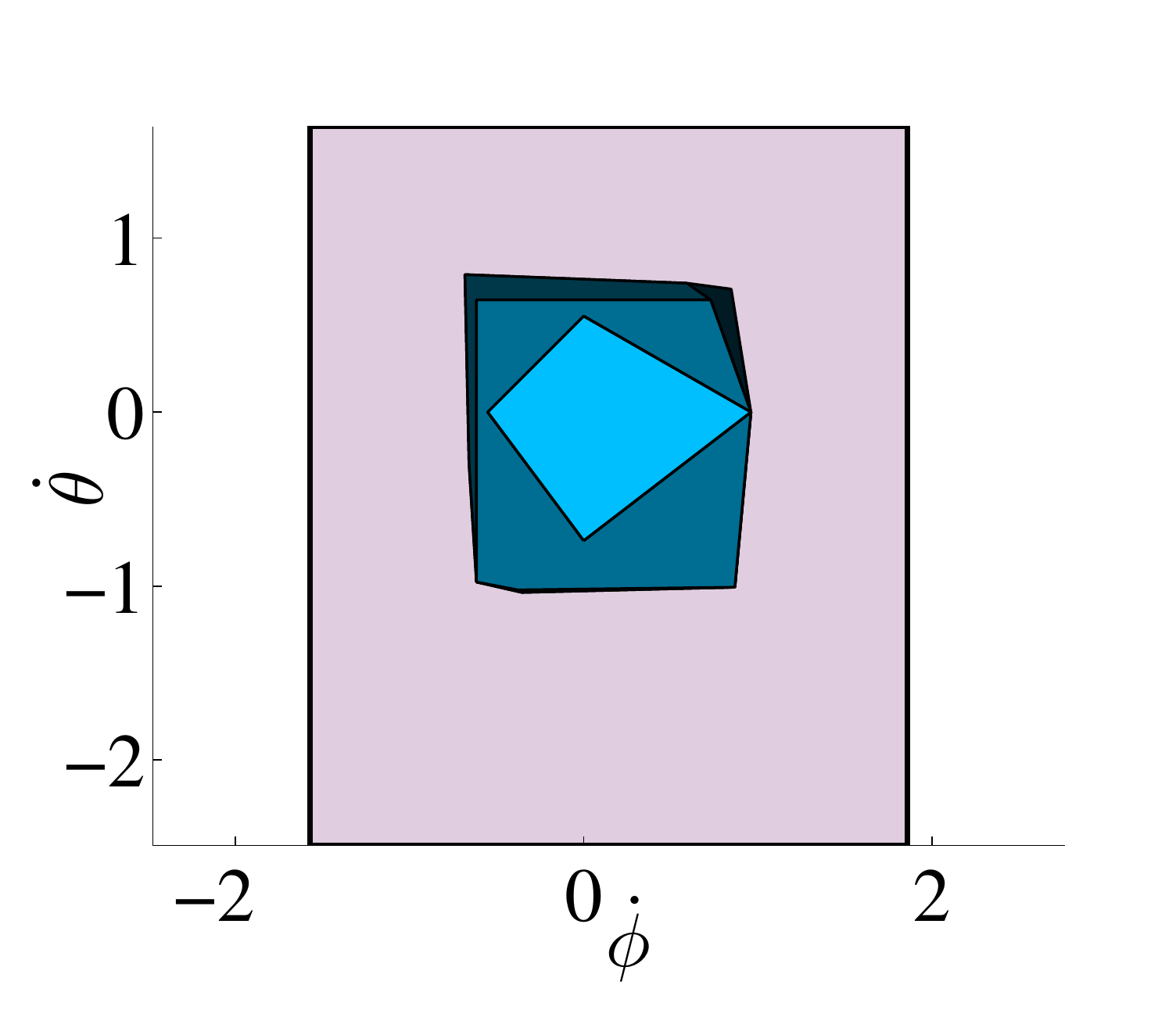}
    \caption{Selected 2D projections of the polytopic approximation of $\Viab_\T(\K)$ for the flight envelope example. Under-approximations for $N=24, 48, 96, 500, 1000, 2000, 3000$ vertices are shown with $N=24$ in the lightest shade of blue (innermost set), and $N=3000$ in the darkest shade of blue. An over-approximation (outermost set) with $N/2$ facets is also shown in lavender. The approximations are tight and touch each other to within a constant accuracy in at least $N/2$ points---a fact that is obscured by the projections.}
    \label{F:quad12D}
\end{figure*}

\section{Conclusions, Extensions, and Future Work}\label{S:conc}

We presented a scalable sampling-based algorithm to generate tight under- and over-approximations of the viability kernel under SD LTI dynamics. We provided correctness and convergence results, discussed a number of heuristics to bias the sampling process for improved performance, and demonstrated the algorithm on a 12D problem of flight envelope protection for an autonomous quadrotor.

The extension to discrete-time systems is straightforward: The set $\K$ is directly used in the feasibility program~\eqref{E:optimization_general} and thus the vertices of the resulting under-approximation and the facets of the over-approximation touch the boundary of the true viability kernel exactly. The algorithm can also be extended to piecewise affine SD (or discrete-time) system by recasting the optimization problem as a mixed-integer program.
Due to space limitations we will present our results on the synthesis of the safety-preserving controllers as well as sufficient conditions under which a generated under-approximation is controlled-invariant (and thus could be used to enforce infinite-horizon safety) in a separate paper.

Robustifying our analysis against unknown but bounded disturbances is challenging (due to the minimax nature of the problem), but a sufficient solution is straightforward and can be done in a number of ways. For instance, the disturbance set could be propagated forward in time, according to which the constraint set could be further eroded to take into account the effect of this uncertainty in future time steps. Alternatively, a pre-stabilization technique ~\cite{Chisci_Rossiter_Zappa_2001} could be used, under certain additional assumptions, such that a portion of the control is dedicated to managing the growth of the disturbance set propagation while the remaining portion is employed to keep the trajectory in $\K$.

The simulation based nature of our algorithm readily admits incorporation of time delays. We will make our initial results on SD systems with transport delays available as a technical report for the interested reader. Finally, we also mention that a similar sampling-based technique can be formulated to approximate nonconvex maximal reachable tubes under LTI dynamics, leveraging the fact that the underlying reachable sets (time slices of the tube) are convex.

Future work for piecewise affine systems includes finding alternative ways to solve the mixed-integer program so as to improve efficiency. The extension of the algorithm to nonconvex constraints remains another challenging problem.




\appendix
%
\section{Appendix: Proof of Proposition~\ref{prop:Convergence}}\label{S:appendix_convergence}

To prove Proposition~\ref{prop:Convergence} we will introduce and make use of a few well-known results from the literature of random algorithms for estimating the boundary of a convex body.

\begin{lem}[\cite{Schutt2003}]
	\label{lem:Convergence_Rate}
	Let $\C$ be a convex body in $\Real^n$ with $\partial \C$ $C^2$, let $f : \partial \C \to \Real_+$ be a probability density function defined on $\partial \C$, let $\mathbb{P}_f$ be the probability measure defined by $f$, and let $\mathbb{E}(f, N)$ be the expected volume of the convex hull of $N$ points chosen randomly on $\partial \C$ with respect to $\mathbb{P}_f$.  Then
	\begin{equation}
		\lim_{N \to \infty} \left(\vol(\C) - \mathbb{E}(f, N)\right) N^\frac{2}{n-1} = c_n(\C),
	\end{equation}
	where $c_n(\C)$ is a constant which depends only on the dimension $n$, the distribution of $f$, and the shape of $\C$.
\end{lem}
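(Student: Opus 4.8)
The plan is to prove the result by localizing the volume deficit $\vol(\C) - \mathbb{E}(f,N)$ near the boundary and analyzing it through a rescaled point process, exploiting the $C^2$ regularity to replace $\partial\C$ locally by its osculating paraboloid. Writing $P_N$ for the random inscribed polytope, the deficit $\vol(\C)-\vol(P_N)$ is the total volume of the ``caps'' lying between the facets of $P_N$ and $\partial\C$. Since $\partial\C$ is $C^2$, near any $x \in \partial\C$ I would choose an affine frame in which the tangent plane is horizontal and $\partial\C$ is the graph of $\tfrac12\sum_{i=1}^{n-1}\kappa_i(x)\, y_i^2 + o(|y|^2)$, where the $\kappa_i(x)$ are the principal curvatures and $\kappa(x)=\prod_i \kappa_i(x)$ is the Gauss--Kronecker curvature. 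The first step is thus to show that, at the scale at which caps form, the higher-order remainder is negligible and the contribution of each boundary neighborhood decouples from the others.

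First I would fix the correct scaling. If points are drawn with density $f$, then the expected number of sample points per unit surface area near $x$ is $N f(x)$, so the natural tangential length scale is $(Nf(x))^{-1/(n-1)}$ and the associated normal (cap-height) scale is $(Nf(x))^{-2/(n-1)}$. Rescaling the tangent coordinates by $(Nf(x))^{1/(n-1)}$ turns the local empirical point process into one of asymptotically unit intensity, which I would compare to a homogeneous Poisson process on $\Real^{n-1}$ via a coupling and a de-Poissonization argument. Under this normalization the anisotropic paraboloid reduces to the standard paraboloid $w \mapsto \tfrac12|w|^2$ by the further linear change $y_i = \kappa_i(x)^{-1/2} z_i$; tracking the Jacobians of both rescalings shows that the expected cap volume per unit surface area at $x$ equals $c_n\, \kappa(x)^{1/(n-1)} f(x)^{-2/(n-1)} N^{-2/(n-1)}$, where $c_n$ is the universal constant given by the expected deficit between the standard paraboloid and the lower convex hull of a unit-intensity Poisson process.

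Having established the pointwise local limit, the next step is to integrate it over $\partial\C$ and justify the interchange of the limit $N\to\infty$ with the surface integral, yielding
\begin{equation*}
    \lim_{N\to\infty}\bigl(\vol(\C)-\mathbb{E}(f,N)\bigr)N^{2/(n-1)} = c_n \int_{\partial\C} \kappa(x)^{1/(n-1)} f(x)^{-2/(n-1)}\, d\mu(x) =: c_n(\C),
\end{equation*}
which is manifestly a constant depending only on $n$, the density $f$, and the shape of $\C$ through its curvature. To make the interchange rigorous I would produce an $N$-uniform integrable majorant for the local deficit density, using the economic cap-covering lemma of B\'{a}r\'{a}ny--Larman together with the theory of Macbeath regions to bound cap volumes uniformly over $\partial\C$ and away from degenerate configurations.

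The main obstacle I anticipate is precisely this uniform control and the decoupling of caps: the facets of $P_N$ are globally dependent, so one must show that at the relevant scale distant boundary neighborhoods contribute asymptotically independently, and that boundary pieces of low curvature or low density (where $f$ is small) do not produce divergent contributions. Handling the tails of the Poisson comparison and verifying that the $o(|y|^2)$ Taylor remainder is uniformly negligible after rescaling---so that dominated convergence applies---is where the technical heart of the argument lies; the curvature and density exponents $\tfrac{1}{n-1}$ and $-\tfrac{2}{n-1}$ themselves fall out cleanly from the two rescalings once this control is in place.
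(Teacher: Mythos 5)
There is nothing in the paper to compare your argument against: the paper does not prove this lemma, but imports it wholesale from the literature, as the bracketed citation \cite{Schutt2003} indicates, and then uses it as a black box in the appendix proof of Proposition~\ref{prop:Convergence}. Measured instead against the proof in that cited work, your outline is a faithful reconstruction of the standard argument in stochastic geometry: localization of the volume deficit into boundary caps, replacement of $\partial\C$ near each point by its osculating paraboloid (legitimate under the $C^2$ hypothesis), the two-scale rescaling with tangential scale $(Nf(x))^{-1/(n-1)}$ and normal scale $(Nf(x))^{-2/(n-1)}$, comparison with a unit-intensity Poisson process plus de-Poissonization, and integration over the boundary with a majorant supplied by economic cap coverings and Macbeath regions. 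Your limiting expression $c_n\int_{\partial\C}\kappa(x)^{1/(n-1)}f(x)^{-2/(n-1)}\,d\mu(x)$ is exactly the constant in the Sch\"utt--Werner theorem; it actually sharpens the paper's statement, which only asserts that $c_n(\C)$ ``depends on $n$, the distribution of $f$, and the shape of $\C$,'' and it substantiates the paper's later remark that the constant involves the Gauss--Kronecker curvature. Two caveats are worth recording. First, what you have written is a roadmap rather than a proof: the items you correctly flag as the technical heart (an $N$-uniform integrable majorant, the Poisson coupling, decoupling of distant caps) are precisely where essentially all of the work in the cited proof resides, so the proposal is not self-contained, though for a theorem of this depth that is to be expected from a blind attempt. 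Second, for the limit to be finite one needs $f$ to be bounded away from zero on $\partial\C$ (or at least an integrability condition on $f^{-2/(n-1)}$ against the curvature measure); the paper's statement of the lemma silently omits this hypothesis, and your sketch correctly identifies small $f$ as a danger point but does not resolve it---in the cited literature this is handled by assuming $f$ positive and continuous.
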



To use this result, we also need the following lemma, which introduces a fictitious source of error between the outcome of our algorithm and the viability kernel:
\begin{lem}[\cite{Ghomi2004}]
	\label{lem:Smooth_Approximations_of_Convex_Bodies}
	For every compact convex set $\C$, there exists a compact convex set $\C' \subseteq \C$ whose boundary $\partial \C'$ is $C^2$, and $\vol(\C) - \vol(\C') = \epsilon_{\mathrm{smooth}}$ for some arbitrarily small positive scalar $\epsilon_{\mathrm{smooth}}$.
\end{lem}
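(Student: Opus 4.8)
The plan is to build $\C'$ in two moves: first smooth $\C$ to a nearby convex body with $C^2$ boundary, then shrink that body just enough to be swallowed by $\C$. Since the statement is vacuous unless $\vol(\C)>0$, I treat the full-dimensional case (the only one relevant for the viability kernel, whose interior is nonempty); after a translation I may assume $0\in\mathrm{int}\,\C$, so there is $r>0$ with $\B_2^n(0,r)\subseteq\C$. Throughout I write $h_\C(u):=\max_{x\in\C}u^\top x$ for the support function, and I recall the standard facts that a function on $\Real^n$ is the support function of some compact convex body iff it is sublinear, and that a convex body whose support function is $C^2$ off the origin and whose Gauss--Kronecker curvature is everywhere positive has boundary of class $C^2$.

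For the smoothing step I would use the classical rotation-mean, i.e.\ mollification of the support function on the sphere. I fix a smooth, nonnegative, rotation-invariant approximate identity $\omega_\tau$ on the compact group $\mathrm{SO}(n)$ with $\int\omega_\tau=1$ whose support shrinks to the identity as $\tau\to0$, and define a body $\C_\tau$ through
\[
 h_{\C_\tau}(u):=\int_{\mathrm{SO}(n)} h_\C(\vartheta^{-1}u)\,\omega_\tau(\vartheta)\,d\vartheta .
\]
Being an average of sublinear functions, $h_{\C_\tau}$ is again sublinear, so $\C_\tau$ is a genuine compact convex body; differentiating under the integral (the regularity now coming entirely from $\omega_\tau$ and the smooth group action) shows $h_{\C_\tau}\in C^\infty$, and a standard computation lets me additionally arrange positive curvature, so that $\partial\C_\tau$ is $C^\infty$ and a fortiori $C^2$. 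Moreover $h_{\C_\tau}\to h_\C$ uniformly on $\S_2^{n-1}$ as $\tau\to0$, i.e.\ the Hausdorff distance $d_{\mathrm H}(\C_\tau,\C)\to0$; I record this as $\C_\tau\subseteq\C\oplus\B_2^n(0,\tau')$ with $\tau':=d_{\mathrm H}(\C_\tau,\C)\to0$.

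Finally I fit a scaled copy inside. Using $\B_2^n(0,r)\subseteq\C$ and convexity gives the Minkowski inclusion $\C\oplus\B_2^n(0,\tau')\subseteq(1+\tau'/r)\,\C$, hence $\C_\tau\subseteq(1+\tau'/r)\C$; setting $\lambda:=(1+\tau'/r)^{-1}<1$ and $\C':=\lambda\,\C_\tau$ then yields $\C'\subseteq\C$. Scaling by $\lambda$ is an affine diffeomorphism, so $\partial\C'$ is still $C^2$ and $\C'$ is compact and convex. Volume continuity under Hausdorff convergence gives $\vol(\C_\tau)\to\vol(\C)$ while $\lambda^n\to1$, so $\vol(\C')=\lambda^n\vol(\C_\tau)\to\vol(\C)$; since $\C'\subseteq\C$ we have $\vol(\C)-\vol(\C')\ge0$, and because $\lambda<1$ this gap is strictly positive (if it vanished, $\C'=\C$ would force $\C$ itself to be $C^2$, in which case I take $\lambda$ marginally smaller). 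Choosing $\tau$ small enough makes $\epsilon_{\mathrm{smooth}}:=\vol(\C)-\vol(\C')$ as small as desired, proving the lemma.

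The crux is the smoothing step: I must produce a body that is genuinely $C^2$ --- not merely $C^{1,1}$, which is all the cheaper opening construction $(\C\ominus\B_2^n(0,\rho))\oplus\B_2^n(0,\rho)$ delivers --- while keeping it under quantitative Hausdorff control. The rotation-mean supplies both, but the technical heart is verifying that the averaged support function has positive-definite spherical Hessian, so that $\partial\C_\tau$ is a true $C^2$ hypersurface rather than a flat-faced limit. Once that is secured, the shrink-to-fit inclusion and the volume-continuity estimate are routine.
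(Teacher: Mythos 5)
You should first note what you are being compared against: the paper does not prove this lemma at all---it is quoted directly from the cited reference \cite{Ghomi2004}---so your proposal is a self-contained substitute for a literature citation. Your overall architecture (smooth outward via the support function, then shrink to fit, then invoke volume continuity) is the standard Schneider-style construction and most of it checks out: the inclusion $\C \oplus \B_2^n(0,\tau') \subseteq (1+\tau'/r)\,\C$ when $\B_2^n(0,r) \subseteq \C$ is a correct convexity computation, the rescaled body inherits compactness, convexity and boundary regularity, and continuity of volume under Hausdorff convergence of convex bodies closes the argument. Your restriction to full-dimensional $\C$ is also harmless for the paper's purposes, since the lemma is only invoked inside the proof of Proposition~1, where volumes are the object of interest.

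The genuine gap is exactly where you yourself point: the assertion that the rotation-mean $\C_\tau$ has $C^\infty$ (hence $C^2$) boundary. Smoothness of the support function does \emph{not} imply smoothness of the boundary: $\partial\C_\tau$ is $C^2$ only if the spherical Hessian of $h_{\C_\tau}$ (the matrix $\nabla^2 h + h\,I$ restricted to tangent directions) is positive \emph{definite}, and averaging only preserves positive \emph{semi}definiteness. Concretely, if $\C$ is a polytope, the surface-area measure of $\C_\tau$ is the convolution of an atomic measure with the kernel, so its density vanishes on open sets of normal directions away from the facet normals; on those sets the support point is stationary, which means $\C_\tau$ retains genuine corners and is not even $C^1$ there. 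So the construction as written fails for precisely the bodies one most cares about. The fix is cheap and standard: after averaging, Minkowski-add a small ball, i.e.\ replace $h_{\C_\tau}$ by $h_{\C_\tau}(u) + \epsilon'\norm{u}_2$. Since the spherical Hessian of $\norm{u}_2$ is the identity, this shifts the Hessian by $\epsilon' I$, making it uniformly positive definite, so the resulting body is of class $C^\infty$ with strictly positive Gauss--Kronecker curvature; the cost is an extra $\epsilon'$ of Hausdorff distance, which your shrink-to-fit step absorbs with no other change. With that one-line repair your proof is complete, and it is arguably preferable to the bare citation in the paper: \cite{Ghomi2004} proves a stronger optimal-smoothing statement, whereas the paper only needs $C^2$ regularity and an arbitrarily small volume deficit, both of which your elementary construction delivers.
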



Define via Lemma~\ref{lem:Smooth_Approximations_of_Convex_Bodies}, a convex body $\overline{\ViabOp}'$ that is a $C^2$ approximation of $\overline{\Viab_\T}(\K, \zeta, \epsilon)$ such that
\begin{equation}
	\vol(\overline{\Viab_\T}(\K, \zeta, \epsilon)) - \vol(\overline{\ViabOp}') = \epsilon_{\mathrm{smooth}}.
	\label{eqn:Definition_of_Viab_Prime}
\end{equation}

%
\begin{lem}\label{Lem:homeomorphism}
	Any compact convex set $\C$ with $r_0 \in \C$ is homeomorphic to $\B^n_2(r_0,1)$~\cite{Bredon1993}, and in particular we can define the invertible mapping $m : S_2^{n-1}(r_0) \to \partial \C$ as $m(r_d) \equiv \textsc{Bisection-Feasibility}\bigl(r_0,\\ \textsc{Find-Intersection-on-Boundary}(\C, \vec{r}), \C\bigr)$, where $\vec{r}$ has origin $r_0 \in \C$ and direction $r_d$.
\end{lem}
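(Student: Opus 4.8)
The plan is to reduce the statement to the single classical fact that, for a full-dimensional compact convex body, radial projection from an interior point is a homeomorphism onto the boundary; the homeomorphism $\C \cong \B^n_2(r_0,1)$ itself is standard and is quoted from~\cite{Bredon1993}, so the real work is to verify that the explicit map $m$ assembled from the two subroutines realizes this radial projection and is therefore invertible. Throughout I take $r_0 \in \inter{\C}$ (as is the case in the algorithm, where $v_0$ is kept near the Chebyshev center of the current under-approximation) and $\C$ full-dimensional; without interiority the claimed homeomorphism with the $n$-ball simply cannot hold, so this hypothesis is essential and should be stated.

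First I would show $m$ is well defined. Fix a direction $r_d$ and consider the ray $\vec{r} = \{r_0 + s r_d \mid s \geq 0\}$. The set $I(r_d) := \{s \geq 0 \mid r_0 + s r_d \in \C\}$ is an interval by convexity, and closed and bounded by compactness, hence $I(r_d) = [0, \bar{s}(r_d)]$ with $\bar{s}(r_d) > 0$ because $r_0 \in \inter{\C}$. The point $r_0 + \bar{s}(r_d) r_d$ is the unique intersection of $\vec{r}$ with $\partial \C$, and this is exactly the output of $\textsc{Find-Intersection-on-Boundary}(\C, \vec{r})$; thus $m$ assigns a single, unambiguous boundary value to each direction.

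Next I would establish bijectivity. For injectivity, two distinct unit directions generate rays meeting only at $r_0 \notin \partial \C$, so their boundary hits differ. For surjectivity, take any $x \in \partial \C$; then $x \neq r_0$, and since the half-open segment from an interior point of a convex set to any point of the set lies in the interior, $[r_0, x) \subset \inter{\C}$. Hence the first boundary point along the ray through $x$ is $x$ itself, i.e. $x = m\bigl((x-r_0)/\norm{x-r_0}\bigr)$. Continuity of $m$ follows from continuity of the radial (gauge-type) function $\bar{s}(\cdot)$ on the sphere, a standard consequence of convexity; and since $\S^{n-1}_2(r_0)$ is compact and $\partial \C \subset \Real^n$ is Hausdorff, the continuous bijection $m$ is automatically a homeomorphism, so continuity of $m^{-1}$ comes for free.

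Finally I would reconcile this geometric radial map with the algorithmic composition, which is where the main obstacle lies. $\textsc{Find-Intersection-on-Boundary}$ produces the radial hit on the outer set, while $\textsc{Bisection-Feasibility}$ then refines the search toward the boundary of the feasible/viable set; one must argue that this oracle traces out a genuinely convex body, so that again exactly one point is returned per direction. This convexity is precisely the convex-combination argument already carried out in the proof of Theorem~\ref{Thm:algorithm_correctness} (a convex combination of safety-preserving controls is safety-preserving), so the radial-projection homeomorphism of the three preceding paragraphs applies verbatim to $\Viab_\T(\K)$ as well as to $\C$. The remaining point is that the $\epsilon$-accuracy of the bisection is immaterial to the topological conclusion: it perturbs the image only within an $\epsilon$-tube of the boundary, which does not affect invertibility and vanishes in the exact ($\epsilon \to 0$) limit. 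Once convexity of the underlying set and $r_0 \in \inter{\C}$ are secured, everything else is routine point-set topology.
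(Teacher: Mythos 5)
Your proposal is correct, but you should know it is doing work the paper never does: the paper offers no proof of this lemma at all, simply citing \cite{Bredon1993} for the homeomorphism and asserting that the subroutine composition defines an invertible map; the lemma is then consumed in the Appendix proof of Proposition~\ref{prop:Convergence}. Your radial-projection argument---well-definedness of $\bar{s}(r_d)$, injectivity because two distinct rays from $r_0$ meet only at $r_0 \notin \partial\C$, surjectivity because $[r_0,x) \subset \inter{\C}$ for every $x \in \partial\C$, continuity of the gauge-type radial function, and the upgrade of a continuous bijection from a compact space to a Hausdorff space into a homeomorphism---is the standard proof of the fact being cited, and each step is sound. Your write-up also adds two things the paper glosses over. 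First, the hypothesis $r_0 \in \inter{\C}$ with $\C$ full-dimensional is genuinely necessary: as literally stated, with only $r_0 \in \C$, the lemma fails (for $r_0 \in \partial\C$ the map collapses a hemisphere of directions onto $r_0$, and a lower-dimensional $\C$ is not homeomorphic to $\B_2^n(r_0,1)$); the paper implicitly relies on $v_0$ lying in the interior of the kernel. Second, your reconciliation step---that the algorithmic composition only realizes the radial projection onto the boundary of the underlying convex body in the exact limit $\epsilon \to 0$, and that the requisite convexity is exactly the convex-combination argument of Theorem~\ref{Thm:algorithm_correctness}---matches how the lemma is actually deployed in the appendix, where it is applied to the $C^2$ body $\overline{\ViabOp}'$ and the $\epsilon \to 0$, $\zeta \to \infty$ limits are taken afterwards. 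In short: the paper's citation buys brevity; your proof buys self-containedness and surfaces a hidden interiority hypothesis that the paper's statement omits.
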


We are now ready to prove Proposition~\ref{prop:Convergence}.

\begin{proof}[Proof of Proposition~\ref{prop:Convergence}]
    Let $f(x)$ be the probability distribution used by $\textsc{Sample-Ray}(v_0)$ to generate samples on the unit sphere.  Then by using the mapping $m$ from Lemma~\ref{Lem:homeomorphism}, we can perform a change of variables to define a new probability density function $g(m(r_d))$ on $\partial \overline{\ViabOp}'$~\cite{Pitman1993}.  Then by Lemma~\ref{lem:Convergence_Rate} we have
    \begin{equation}\label{eqn:Convergence_Proof_Step_1}
    		\lim_{N \to \infty} \left(\vol(\overline{\ViabOp}') - \mathbb{E}(g, N)\right) N^\frac{2}{n-1} = \tilde{c}_n(\overline{\ViabOp}')
    \end{equation}
    for some constant $\tilde{c}_n(\overline{\ViabOp}')$ which depends only on the dimension $n$, the distribution $g$, and the shape of $\overline{\ViabOp}'$.

    Since by Lemma~\ref{lem:Smooth_Approximations_of_Convex_Bodies} $\overline{\ViabOp}'$ can be made arbitrarily close to $\overline{\Viab_\T}(\K,\zeta,\epsilon)$ (i.e. we can take $\epsilon_\mathrm{smooth} \to 0$), the distribution $g$ is mapped almost identically on $\partial \overline{\Viab_\T}(\K,\zeta,\epsilon)$ and we can write \eqref{eqn:Convergence_Proof_Step_1} as
    \begin{multline}\label{eqn:Convergence_Proof_Step_2}
        	\lim_{N \to \infty} \left(\vol(\overline{\Viab_\T}(\K,\zeta,\epsilon)) - \mathbb{E}(g, N) \right) N^\frac{2}{n-1}\\
    		= \lim_{N \to \infty} \left(\vol(\overline{\Viab_\T}(\K,\zeta,\epsilon)) - \vol(\V_N^{\zeta,\epsilon}) \right) N^\frac{2}{n-1}\\
    		= \tilde{c}_n(\overline{\Viab_\T}(\K,\zeta,\epsilon))
    \end{multline}
    since $\mathbb{E}(g, N) = \vol(\V_N^{\zeta, \epsilon})$.

    Taking the limit as $\zeta \to \infty$ and $\epsilon \to 0$ on \eqref{eqn:Convergence_Proof_Step_2} gives
    \begin{equation}
        \lim_{\substack{N \to \infty\\ \zeta \to \infty\\ \epsilon \to 0}} \left(\vol(\overline{\Viab_\T}(\K)) - \vol(\V_N^{\zeta,\epsilon}) \right) N^\frac{2}{n-1}
    		= \tilde{c}_n(\overline{\Viab_\T}(\K)).
    \end{equation}
    Now, by \eqref{eqn:Definition_of_Epsilon_Cont}, we can replace $\vol(\overline{\Viab_\T}(\K))$ to get
    \begin{equation}
        \lim_{\substack{N \to \infty\\ \zeta \to \infty\\ \epsilon \to 0}} \left(\vol(\Viab_\T(\K)) -\epsilon_{\mathrm{cont}}(M\delta) - \vol(\V_N^{\zeta,\epsilon}) \right)
         N^\frac{2}{n-1} = \tilde{c}_n(\overline{\Viab_\T}(\K)).
    \end{equation}

    The shape of $\overline{\Viab_\T}(\K)$ depends only on the shape of $\Viab_\T(\K)$ and the value $M\delta$. Thus there exists an appropriate function $c_n$ such that $\tilde{c}_n(\overline{\Viab_\T}(\K)) = c_n(\Viab_\T(\K),M\delta)$. Consequently, we arrive at \eqref{E:convergence_rate}.
    %
\end{proof}

\bibliographystyle{IEEEtran}
\bibliography{IEEEabrv,viabET_v2,library,sos_based}            

\end{document}